\begin{document}

\renewcommand{\phi}{\varphi}
\renewcommand{\emptyset}{\varnothing}
\renewcommand{\epsilon}{\varepsilon}
\newcommand{\defname}[1]{\textbf{{#1}}}
\newcommand{\df}{:=}
\newcommand{\bnf}{::=}
\newcommand{\Pow}{\mathcal P}
\newcommand{\pow}[1]{\Pow(#1)}

\newcommand{\IH}{\mathit{IH}}

\newcommand{\Nat}{\mathbb{N}}
\newcommand{\Int}{\mathbb{Z}}
\newcommand{\Cal}[1]{\mathcal{#1}}
\newcommand{\SF}[1]{\mathsf{#1}}
\newcommand{\BB}[1]{\mathbb{#1}}
\newcommand{\resp}{\textit{resp.}\xspace}
\newcommand{\aka}{\textit{aka}\xspace}
\newcommand{\viz}{\textit{viz.}\xspace}
\newcommand{\ie}{\textit{i.e.}\xspace}
\newcommand{\cf}{\textit{cf.}\xspace}
\newcommand{\wrt}{\textit{wrt}\xspace}
\newcommand{\rk}[1]{\SF{rk}(#1)}

\newcommand{\red}[1]{{\color{red} #1}}
\newcommand{\blue}[1]{{\color{blue} #1}}
\newcommand{\mgt}[1]{\textcolor{magenta}{#1}}
\newcommand{\cyn}[1]{\textcolor{cyan}{#1}}
\newcommand{\olv}[1]{\textcolor{olive}{#1}}
\newcommand{\orange}[1]{{\color{orange}#1}}
\newcommand{\purple}[1]{{\color{purple}#1}}
\newcommand{\green}[1]{{\color{ForestGreen}#1}}

\newcommand{\anupam}[1]{\todo{Anu: #1}}
\newcommand{\abhishek}[1]{\todo{Abhi: #1}}
\newcommand{\note}[1]{\mgt{#1}}


\newtheorem{theorem}{Theorem}
\newtheorem{conjecture}{Conjecture}
\newtheorem{lemma}[theorem]{Lemma}
\newtheorem{proposition}[theorem]{Proposition}
\newtheorem{observation}[theorem]{Observation}
\crefformat{observation}{Observation~#2#1#3}

\newtheorem{corollary}[theorem]{Corollary}
\newtheorem{fact}[theorem]{Fact}

\theoremstyle{definition}
\newtheorem{definition}[theorem]{Definition}
\newtheorem{example}[theorem]{Example}
\newtheorem{remark}[theorem]{Remark}
\newtheorem{convention}[theorem]{Convention}

\newcommand{\Alphabet}{\mathcal{A}}
\newcommand{\Var}{\mathcal{V}}

\newcommand{\proves}{\vdash}

\newcommand{\Lang}{\mathcal L}
\newcommand{\lang}[1]{\Lang(#1)}

\newcommand{\wLang}{\Lang}
\newcommand{\wlang}[1]{\wLang(#1)}

\newcommand{\FV}{\mathrm{FV}}
\newcommand{\fv}[1]{\FV(#1)}

\newcommand{\fint}[1]{\lceil #1 \rceil}

\newcommand{\FL}{\mathrm{FL}}
\newcommand{\flred}{\rightarrow_{\FL}}
\newcommand{\flredeq}{\flred^{=}}
\newcommand{\fl}[1]{\FL(#1)}
\newcommand{\eqfl}{=_\FL}
\newcommand{\lefl}{<_\FL}
\newcommand{\leqfl}{\leq_\FL}
\newcommand{\geqfl}{\geq_\FL}

\newcommand{\subform}{\sqsubseteq}
\newcommand{\supform}{\sqsupseteq}
\newcommand{\dle}{\prec}
\newcommand{\dleq}{\preceq}
\newcommand{\dge}{\succ}
\newcommand{\dgeq}{\succeq}


\newcommand{\LTL}{\mathsf{LTL}}
\newcommand{\muLTL}{\mu\LTL}

\newcommand{\seqar}{\rightarrow}

\newcommand{\id}{\mathsf{id}}

\newcommand{\K}{\mathsf{h}}
\newcommand{\kk}[1]{\K_{#1}}
\newcommand{\wk}{\mathsf{w}}
\newcommand{\cntr}{\mathsf{c}}
\newcommand{\cut}{\mathsf{cut}}

\newcommand{\partition}{\mathsf p}
\newcommand{\dis}{\lr \partition}
\newcommand{\exh}{\rr \partition}

\newcommand{\lr}[1]{#1\text{-}l}
\newcommand{\rr}[1]{#1\text{-}r}
\newcommand{\func}{\SF{func}}

\newcommand{\KA}{\mathsf{KA}}
\newcommand{\lhKA}{\ell\KA}
\newcommand{\HKA}{\mathsf{HKA}}

\newcommand{\RLA}{\ensuremath{\mathsf{RLA}}\xspace}
\newcommand{\RLL}{\ensuremath{\mathsf{RLL}}\xspace}
\newcommand{\fd}[1]{#1^{\mathit{fd}}}
\newcommand{\RLLfd}{\fd\RLL}

\newcommand{\RLLLang}{\RLL_\Lang}

\newcommand{\ubdd}{\mathsf u}
\newcommand{\gdd}{\mathsf g}
\newcommand{\gRLL}{\gdd\RLL}
\newcommand{\gRLLfd}{\fd\gRLL}
\newcommand{\uRLL}{\ubdd\RLL}
\newcommand{\uRLLfd}{\fd\uRLL}

\newcommand{\CRLL}{\ensuremath{\mathsf{CRLL}}\xspace}
\newcommand{\LRLA}{\ensuremath{\mathsf{LRLA}}\xspace}
\newcommand{\LRLL}{\ensuremath{\mathsf{LRLL}}\xspace}
\newcommand{\LRLAhat}{\widehat\LRLA}
\newcommand{\LRLLhat}{\widehat\LRLL}
\newcommand{\JRLAhat}{\widehat{\mathsf{JRLA}}}
\newcommand{\CRLA}{\ensuremath{\mathsf{CRLA}}\xspace}

\newcommand{\CRLLfd}{\CRLL^{\mathit{fd}}}

\newcommand{\uCRLL}{\ubdd\CRLL}
\newcommand{\uLRLLhat}{\ubdd\LRLLhat}
\newcommand{\uCRLLfd}{\fd\uCRLL}
\newcommand{\uLRLLhatfd}{\fd\uLRLLhat}

\newcommand{\LRLLLanghat}{\mathsf L\widehat\RLL_\Lang}
\newcommand{\CRLLLang}{\CRLL_\Lang}
\newcommand{\LRLLLang}{\mathsf L \RLL_\Lang}

\newcommand{\ind}{\mathsf{ind}}
\newcommand{\coind}{\mathsf{coind}}

\newcommand{\CRLLLangc}{\CRLLLang^c}
\newcommand{\LRLLLangc}{\LRLLLang^c}


\newcommand{\cons}{::}
\newcommand{\inl}[1]{\SF{inl}(#1)}
\newcommand{\inr}[1]{\SF{inr}(#1)}
\newcommand{\terms}[1]{\left\lfloor #1 \right\rfloor}
\newcommand{\restrict}[2]{{
  \left.\kern-\nulldelimiterspace 
  #1 
  \littletaller 
  \right|_{#2} 
  }}
\newcommand{\littletaller}{\mathchoice{\vphantom{\big|}}{}{}{}}
\newcommand{\stub}[2]{\vltr {#1}{#2}{\vlhy{}}{\vlhy{}}{\vlhy{}}}


\newcommand{\cutred}{\triangleright}


\renewcommand{\c}[1]{#1^c}
\newcommand{\infrule}{\mathsf{r}}

\newcommand{\strat}[1]{\mathfrak{#1}}

\newcommand{\gtop}{\top_\Alphabet}


\newcommand{\Eloise}{\exists}
\newcommand{\Abelard}{\forall}

\newcommand{\prover}{\mathbf{P}}
\newcommand{\denier}{\mathbf{D}}

\newcommand{\cantor}{\mathcal C}

\title[Cyclic system for an algebraic theory of APAs]{Cyclic system for an algebraic theory of alternating parity automata}

\author{Anupam Das \and Abhishek De}
\address{University of Birmingham}
\date{\today}

\begin{abstract}
$\omega$-regular languages are a natural extension of the regular languages to the setting of infinite words. Likewise, they are recognised by a host of automata models, one of the most important being Alternating Parity Automata (APAs), a generalisation of Büchi automata that symmetrises both the transitions (with universal as well as existential branching) and the acceptance condition (by a parity condition). 

In this work we develop a cyclic proof system manipulating APAs, represented by an algebraic notation of Right Linear Lattice expressions. This syntax dualises that of previously introduced Right Linear Algebras, which comprised a notation for non-deterministic finite automata (NFAs). This dualisation induces a symmetry in the proof systems we design, with lattice operations behaving dually on each side of the sequent. 
Our main result is the soundness and completeness of our system for $\omega$-language inclusion, heavily exploiting game theoretic techniques from the theory of $\omega$-regular languages.
\end{abstract}

\maketitle    

\todo[inline]{General todos:



* add examples/development of complementation after completeness

* Add translation from APAs to RLL expressions (possibly in appendices)

* discuss complexity of proof search

* improve completeness to unguarded sequents
}

\section{Introduction}

The theory of $\omega$-regular languages is among the most important in all of computer science. 
\emph{B\"uchi automata}, the classical characterisation of $\omega$-regularity, are a variation of usual finite state automata that run on infinite words.
B\"uchi's famous complementation theorem for these automata is the engine underlying his proof of the decidability of monadic second-order logic (MSOL) over infinite words \cite{Buchi66:buchi-compl-mso}.
Its extension to infinite trees, \emph{Rabin's Tree Theorem} \cite{Rabin68}, is often referred to as the `mother of all decidability results'.

McNaughton showed that, while B\"uchi automata could not be determinised per se, a naturally larger class of acceptance conditions (Muller or parity) allowed such determinisation, a highly technical result later improved by Safra \cite{Safra}. 
A later relaxation was the symmetrisation of the transition relation itself: instead of only allowing nondeterministic states, i.e.\ existential branching, allow \emph{co-nondeterministic} ones too, i.e.\ universal branching. 
This has led to beautiful accounts of $\omega$-language theory via the theory of positional and finite memory games;
the resulting computational model, \emph{alternating parity automata} (APAs), 
is now the go-to model in textbook presentations of $\omega$-regular language theory \cite{GTW03,PerPin:inf-word-aut-book,Bojan23course}. 
Indeed, their features more closely mimic those of logical settings where such symmetries abound, e.g. the linear-time $\mu$-calculus ($\muLTL$) and MSOL over infinite words.

\subsection{Contribution}
In this work we design a system for reasoning \emph{natively} about APAs, in the form of \emph{right-linear lattice} (RLL) expressions. 
This syntax is a dualisation of previously studied \emph{right-linear algebra} (RLA) expressions \cite{DD24a,DasDe24:rla-preprint}, which comprise a notation for non-deterministic finite word automata (NFAs). 
While RLA expressions model non-determinism by a join-semilattice structure $(0,+)$, and resolve cycles of an automaton by \emph{least} fixed points $\mu$, RLL expressions can further model co-nondeterminism by a lattice structure $(0,+,\top,\cap)$ and resolve infinite paths of an automaton by a combination of least and \emph{greatest} fixed points $\nu$, modelling the parity condition.

Our system, $\CRLLLang$, is a two-sided sequent calculus admitting \emph{cyclic proofs}: proof trees may be non-wellfounded but regular. 
Thus, while usual inductive proofs may be represented as finite trees or dags, cyclic proofs are represented by finite graphs, possibly with cycles.
Naturally non-wellfounded reasoning may be fallacious, and so $\CRLLLang$ is equipped with a global correctness condition at the level of infinite paths along formula ancestry.
Our main result is the soundness and completeness of $\CRLLLang$ with respect to the intended model of $\omega$-languages. 
Thus $\CRLLLang$ is suitable for reasoning directly about APA inclusions (and thereby emptiness, universality, equivalence).

The techniques we employ in this work draw heavily from the literature on cyclic proofs, in particular the \emph{game theoretic} approach dating back to Niwinsk\'i and Walukiewicz \cite{NiwWal96:games-mu-calc}.
Soundness of $\CRLLLang$ is established via an infinite descent argument that is typical of cyclic proof theory. 
In this work we formulate this argument as a reduction to the adequacy of certain \emph{evaluation games}, mirroring the well-known \emph{acceptance games} for APAs.
Completeness of $\CRLLLang$ further exploits the finite-memory determinacy of the associated \emph{proof search game}, implied by the well-known B\"uchi-Landweber theorem for $\omega$-regular games \cite{BuchiLandweber}.

\subsection{Related work}
We have already mentioned the previous work \cite{DD24a,DasDe24:rla-preprint} that studied RLA expressions notating NFAs. 
That same work also considered an extension by greatest fixed points, but without meet-lattice structure $(\top,\cap)$. 
Such expressions thus model \emph{nondeterministic} parity automata, so the corresponding system $\nu\CRLA$ is strictly subsumed by the system $\CRLLLang$ we present here.
In particular the symmetry of APAs renders $\CRLLLang$ more symmetric than $\nu\CRLA$, with lattice operations behaving dually on each side of a sequent.

Using fixed points to model parity conditions is an old idea, going back to work of Streett and Emerson \cite{StrEme89:aut-th-proc-mu-calc} in the setting of the modal $\mu$-calculus.
The latter's {linear-time} restriction $\muLTL$ offers an alternative syntax for APAs, but one that we argue is not as close as RLL expressions. 
In particular $\muLTL$ formulas, built over classical logic, are equipped with native complementation, while RLL expressions are not: they are set in the language of lattice theory rather than Boolean algebra.
Previously studied systems for $\muLTL$ include a complete axiomatisation due to Kaivola \cite{Kaivola95} (see also \cite{Doumane17}) and, most related to this work, a cyclic system due to Dax, Hofmann and Lange \cite{DHL06:muLTL}.

Our algebraic syntax is rather inspired by the tradition of \emph{Kleene algebras}, certain structures interpreting regular expressions, and friends (see, e.g., \cite{Salomaa66,Krob91,Kozen94,Boffa90,Boffa95}).
Regular expressions can be viewed as a notation for NFAs, though part of the motivation for RLA expressions in \cite{DD24a,DasDe24:rla-preprint} was to study a more native syntax, by adding multiplication but accommodating fixed points.
In fact the \emph{Right Linear Algebras} proposed in that work strictly generalise (even left-handed) Kleene algebras.
\emph{$\omega$-regular expressions} are a modification of regular expressions that similarly model B\"uchi automata.
They have enjoyed both axiomatic treatments (see, e.g., \cite{Wagner76,Cohen00,LS12,CLS15}) and more recently a cyclic system \cite{HazKup22}, building on the cyclic system of \cite{DasPou17:hka} for Kleene Algebra.

\subsection{Structure of paper}
In \cref{sec:prelims} we present RLL expressions as a notation for APAs, along with their intended $\omega$-language semantics and give several examples.
In \cref{sec:eval-games} we define evaluation games for RLL expressions, and prove their adequacy for the language semantics.
In \cref{sec:crll} we define our cyclic system $\CRLLLang$ and give several examples of (non-)proofs.
We prove the soundness and completeness of $\CRLLLang$ in \cref{sec:adequacy}, eventually by reduction to the adequacy of evaluation games.
Finally we present some concluding remarks in \cref{sec:concs}.

\subsection{Background and prerequisites}
We do not formally assume any particular prerequisites, and aim to present results in as self-contained a manner as possible. 
Nonetheless it is helpful to have some familiarity with the theory of $\omega$-automata and their correspondences with logics and games.
Useful references include the books \cite{GTW03,PerPin:inf-word-aut-book}, as well as the course \cite{Bojan23course}, from which we take our basic definitions.
Naturally any familiarity with our previous work \cite{DD24a,DasDe24:rla-preprint} would be useful, but not strictly necessary.
\section{RLL expressions: a notation for alternating parity automata}
\label{sec:prelims}

Let us fix a finite set $\Alphabet$ (the \defname{alphabet}) of \defname{letters}, written $a,b,$ etc., and a countable set $\Var$ of \defname{variables}, written $X,Y,$ etc.

\subsection{Syntax and semantics of right-linear lattice expressions}
In this subsection we introduce the basic syntax and semantics we shall work with, before relating them to automaton models later.

\begin{definition}
\defname{Right-linear lattice expressions}, or simply \textbf{(RLL-)expressions}, written $e,f$, etc. are generated as follows,
\[
\begin{array}{rcl@{\quad \mid \quad }c@{\quad \mid \quad }c@{\quad \mid \quad }c}
     e,f,\dots & \bnf \quad & X \quad \mid \quad ae & 0 & e+f & \mu X e \\
        &   &   & \top & e\cap f & \nu X e
\end{array}
\] 
where $X\in\Var$ and $a\in\Alphabet$. 

The \textbf{free variables} of an expression are defined as expected, understanding $\mu$ and $\nu$ as variable binders.
A \defname{closed} expression is one with no free variables (otherwise it is \defname{open}).
A (closed) expression $e$ is \defname{guarded} if each of its variable occurrences $X$ occurs free in a subexpression of form $af$.
\end{definition}

We may sometimes refer to expressions as `formulas' when it is more natural (e.g.\ `subformula', or `principal formula').
The intended semantics of expressions is given by languages of infinite words: 
\begin{definition}
    [Language semantics]
    \label{dfn:lang-semantics}
    Let us temporarily expand the syntax of expressions to include each language $A\subseteq \Alphabet^\omega$ as a constant symbol. We interpret each closed expression (of this expanded language) as a subset of $ \Alphabet^\omega$ inductively as follows:
    \[
    \begin{array}{r@{\ \df \ }l}
         \lang A & A \\
         \lang 0 & \emptyset \\
         \lang{e+f} & \lang e \cup \lang f \\
         \lang{\mu X e(X)} & \bigcap \{A \supseteq \lang{e(A)}
    \end{array}
    \qquad
    \begin{array}{r@{\ \df \ }l}
         \lang {ae} & \{aw : w \in \lang e\} \\
         \lang \top & \Alphabet^\omega \\
         \lang{e\cap f} & \lang e \cap \lang f \\
         \lang{\nu X e(X)} & \bigcup \{A \subseteq \lang{e(A)}
    \end{array}
    \]
\end{definition}

The interpretation of $a\cdot, 0,\top, +, \cap$ should be familiar from formal language theory. 
For the binders, the idea is that $\Lang$ interprets $\mu X e(X)$ and $\nu X e(X)$ as the least and greatest fixed points, respectively, of the operation $ A\mapsto \lang{e(A)}$.
It is not hard to see that each such operation is \emph{monotone}, i.e.\ $A\subseteq B \implies \lang {e(A)} \subseteq \lang {e(B)}$, by a straightforward induction on the structure of $e$.\todo{could insert statement and proof}
Thus by the \emph{Knaster-Tarski Theorem} $ A\mapsto \lang{e(A)}$ indeed has a least and greatest fixed point in $\Lang$, given by the intersection of pre-fixed points and union of post-fixed points, motivating the definitions of $\lang{\mu X e(X)}$ and $\lang{\nu X e(X)}$.

\begin{example}
[Empty and universal languages]
\label{ex:empty-and-universal}
We have $\lang{\mu X X}=\emptyset$ and $\lang{\nu X X }=\Alphabet^\omega$.
Thus we can say that the structure $\Lang$ satisfies $0= \mu X X$ and $\top = \nu X X$.
\end{example}

\begin{example}
    [$\omega$-iteration]
    \label{ex:omega-iteration}
    We have $\lang {\nu X (aX)} = a^\omega$ and $\lang {\nu X (aX + bX)} = \{a,b\}^\omega$.
\end{example}

In previous work \cite{DD24a,DasDe24:rla-preprint} we studied expressions without $\top, \cap$ and $\nu$, called \emph{right-linear algebra (RLA) expressions}, which semantically denote languages of finite words rather than $\omega$-words.\footnote{More precisely, one must include a constant symbol $1$, with $\lang 1 \df \{\epsilon\}$, so that expressions do not trivially always denote the empty language.} 
The terminology `right-linear' is drawn from the context-free grammar literature, as both RLA expressions and RLL expressions allow products $ef$ only when $e$ is a letter $a \in \Alphabet$.
RLA expressions can be construed as a notation for non-deterministic finite automata (equivalently, right-linear grammars), and duly denote just the regular languages.

As RLL expressions denote languages of infinite words, we are interested in the corresponding notion of regularity.
Let us henceforth freely use operations from formal language theory when manipulating languages, e.g.\ writing $A^*$, $A^\omega$ and $AB$ for their usual definitions, when $A\subseteq \Alphabet^*$ and $B\subseteq \Alphabet^{\leq\omega}$. 

\begin{definition}
[$\omega$-regular languages]
    A language $A \subseteq \Alphabet^\omega$ is \defname{$\omega$-regular} if we have $A = \bigcup\limits_{i<n} B_iC_i^\omega$ for some $B_i,C_i$ regular and $\epsilon \notin C_i$.\footnote{In fact the requirement that $\epsilon \notin C_i$ can be safely dropped, understanding that $\epsilon^\omega$ be identified with $\Alphabet^\omega$.} 
\end{definition}

There are now several equivalent presentations of $\omega$-regular languages. 
The one above is often called the \emph{Kleene closure} of regular languages, a general way to define the infinite-word analogue of a class of finite-word languages.
The most common presentations are via $\omega$-automata, such as \emph{B\"uchi automata} and \emph{parity automata} (see, e.g., \cite{GTW03,PerPin:inf-word-aut-book,Bojan23course}); we shall not survey these in detail here, but will develop \emph{alternating} parity automata later in the section.

It turns out that RLL expressions denote just the $\omega$-regular languages, wrt the interpretation $\lang \cdot$. 
One direction, that RLL expressions \emph{exhaust} the $\omega$-regular languages, was already observed in previous work:

\begin{proposition}
[\cite{DD24a}]
    \label{prop:omega-reg-have-munu-exps}
For every $\omega$-regular language $A\subseteq \Alphabet^\omega$ there is a guarded expression $e$ with $A =\lang e$.
\end{proposition}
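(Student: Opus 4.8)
The plan is to exploit closure of the language semantics under $+$ to reduce to a single product $BC^\omega$, and then to realise $C^\omega$ as a greatest fixed point of the map $A\mapsto CA$, using only the `nondeterministic' fragment of the syntax, namely $0,+,a\cdot,\mu,\nu$; in particular neither $\top$ nor $\cap$ is needed. Writing $A=\bigcup_{i<n}B_iC_i^\omega$ with each $B_i,C_i$ regular and $\epsilon\notin C_i$, and since $\lang{e+f}=\lang e\cup\lang f$, it suffices to produce a guarded expression for each factor $B_iC_i^\omega$ and take their $+$-sum (with the empty sum, $n=0$, handled by $0$, which is vacuously guarded).

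The key tool is a substitution lemma connecting the finite-word and $\omega$-word semantics. Recall from \cite{DD24a} that the finite-word fragment (RLA expressions, which additionally carry a constant $1$ with $\lang 1=\{\epsilon\}$) notates exactly the regular languages, and that every regular $L\subseteq\Alphabet^*$ is denoted by a \emph{guarded} such expression $r_L$, obtainable by solving the right-linear system of an $\epsilon$-free NFA, whose equations have the shape $X_q=\sum_{(q,a,q')}aX_{q'}$, with an extra summand $1$ precisely when $q$ is final, by Gaussian elimination with $\mu$. For such an $r$ with $\lang r=L$ and a closed RLL expression $f$, write $r[1{:=}f]$ for the result of replacing each occurrence of $1$ in $r$ by $f$. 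A routine induction on the structure of $r$ then gives $\lang{r[1{:=}f]}=L\cdot\lang f$, the only nontrivial case being $\mu$, where one checks that the least fixed point of $A\mapsto\lang{r'[1{:=}f]}(A)$ computes $L\cdot\lang f$, using that $\mu$ denotes a least fixed point in both semantics.

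With this in hand I would argue as follows. Reading the substitution semantically, $A\mapsto\lang{r_{C_i}[1{:=}X]}$ is exactly the monotone operation $A\mapsto C_i\cdot A$ on $\omega$-languages, so $\lang{\nu X\,r_{C_i}[1{:=}X]}$ is its greatest fixed point; a short argument identifies this as $C_i^\omega$, since any post-fixed point $A\subseteq C_iA$ decomposes each of its words into infinitely many nonempty $C_i$-blocks, using $\epsilon\notin C_i$. Moreover $\nu X\,r_{C_i}[1{:=}X]$ is guarded: as $\epsilon\notin C_i$ we may take $r_{C_i}$ with its initial state non-final, so every occurrence of $1$ (hence, after substitution, every occurrence of $X$) lies under a letter, while the bound variables of $r_{C_i}$ were already guarded. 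Applying the substitution lemma once more with the \emph{closed} expression $f_i:=\nu X\,r_{C_i}[1{:=}X]$ yields $\lang{r_{B_i}[1{:=}f_i]}=B_i\cdot C_i^\omega$, and this expression is guarded because $f_i$ is closed and guarded, so substituting it for $1$ cannot create an unguarded variable occurrence. Taking $e:=\sum_{i<n}r_{B_i}[1{:=}f_i]$ then gives a guarded expression with $\lang e=A$.

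The main obstacle is the substitution lemma, and specifically its fixed-point case: the inductive invariant must be strong enough to commute substitution-for-$1$ with the least-fixed-point operator across the two semantics, and one must avoid any spurious identification of $\{\epsilon\}$ with an $\omega$-word—this is exactly why $1$ is kept confined to the finite-word fragment and why we only ever substitute \emph{closed} $\omega$-expressions, or a fresh continuation variable, for it. The secondary, purely bookkeeping obstacle is to ensure the finite-word expressions $r_{B_i},r_{C_i}$ can be taken guarded, which reduces to eliminating $\epsilon$-transitions before solving the associated right-linear systems.
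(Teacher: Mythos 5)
Your proposal is correct, but there is no proof in the paper to compare against line by line: the proposition is imported wholesale from \cite{DD24a}, with only the side remark that it holds already for expressions free of $\top,\cap$ --- a point your construction confirms, since you never use the meet structure. The closest the paper comes to indicating a method is the remark ``Automata to expressions'', which sketches a different route from yours: regard an automaton as a hierarchical system of equations (one variable per state) and solve it by Beki\'c's Lemma, with the elimination order dictated by the colouring ($\mu$ for odd colours, $\nu$ for even). That route is more general, in that it turns an arbitrary APA directly into an expression and so justifies viewing RLL expressions as a genuine \emph{notation} for APAs; your route is more elementary and self-contained, working straight from the Kleene-closure definition $A=\bigcup_{i<n}B_iC_i^\omega$, reusing the finite-word result of \cite{DD24a} for the regular parts, and introducing a single $\nu$ per $\omega$-power. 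Your key steps all check out: the substitution lemma $\lang{r[1{:=}f]}=\lang{r}\cdot\lang{f}$ holds because right-linear syntax keeps $1$ in tail position and concatenation distributes over arbitrary unions in its left argument, so the transfinite approximants of the two least fixed points match stage by stage; the greatest fixed point of $A\mapsto C_iA$ is exactly $C_i^\omega$ precisely because $\epsilon\notin C_i$ forces every block of the unfolding to be nonempty; and guardedness survives both substitutions, since $1$ can be kept under a letter in $r_{C_i}$ (an NFA for a language avoiding $\epsilon$ cannot have a final initial state) and the closed guarded $f_i$ cannot create unguarded variable occurrences when plugged into $r_{B_i}$.
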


\cite{DD24a,DasDe24:rla-preprint} further studied the extension of RLA expressions by $\nu$ (but without $\top,\cap$), so-called $\nu$RLA expressions; indeed the above result holds already for $e$ free of $\top,\cap$.
The proof theory of $\nu$RLA expressions developed in \cite{DD24a,DasDe24:rla-preprint} is much simpler (but also more restricted) than that of RLL expressions presented in this work. 
Indeed the presence of $\top$ and $\cap$ renders our syntax fully symmetric: $\top$ is dual to $0$, $+$ is dual to $\cap$, and $\mu$ is dual to $\nu$. 
While $\nu$RLA expressions correspond to \emph{non-deterministic} parity automata, RLL expressions correspond to the symmetric model of \emph{alternating} parity automata.

\begin{example}
    [(In)finitely many]
    \label{ex:(in)finitely-many}
    Let us fix the alphabet $\Alphabet = \{a,b\}$.

    $f_a \df \mu X (aX + bX +  \nu Y(bY))$ denotes (in $\Lang$) the language $F_a$ of words with only finitely many $a$s. To see this let us reason within the structure $\Lang$. 
    Recalling that $\lang{\nu Y (bY)} = b^\omega$, clearly $F_a$ is a prefixed point as it is closed under $A \mapsto aA + bA +b^\omega$. To see that it is the least such point, let $A$ be another prefixed point. We have:
    \[
    \begin{array}{rcll}
         A & \supseteq & aA + bA + b^\omega \\
            & \supseteq & (a+b)A + b^\omega \\
            & \supseteq & (a+b)(a+b)A + (a+b)b^\omega \\
            & \vdots & \\
            & \supseteq & \sum\limits_{n<\omega}(a+b)^nb^\omega \\
            & \supseteq & F_a
    \end{array}
    \]

     $i_a \df \nu X \mu Y (aX + bY)$ denotes the language $I_a$ of words with infinitely many $a$s. 
     First note that, for any language $A$, we have $\lang{\mu Y (A + bY)} = b^*A$.
     From here, to show that $I_a$ is a postfixed point of $A \mapsto \mu Y (aA + bY) $, it suffices to show that $I_a \subseteq b^*aI_a$; this holds since every word $w$ with infinitely many $a$s can be written $w = b^*a w'$.
      Now suppose $B$ is another postfixed point. We have:
      \[\begin{array}{rcl}
          B & \subseteq & b^*aB \\
            & \subseteq & b^*ab^*aB \\
            & \vdots & \\
            & \subseteq & (b^*a)^\omega \\
            & \subseteq & I_a
      \end{array}\]
    
    Now, putting these together, we have that $\lang{i_a \cap f_b}$ is the language of infinite words with infinitely many $a$s but finitely many $b$s.
    It is also immediate that $f_a\cap f_b$ and $i_a\cap f_a$ both denote the empty language in $\Lang$.
\end{example}

\subsection{Expressions as automata}

In this subsection we shall introduce a textbook automaton model for computing $\omega$-languages, essentially following the exposition in \cite{Bojan23course}.

An \defname{alternating parity automaton} (APA) is a tuple $\mathbf A = (Q, \Delta, X_0, c) $ where:
\begin{itemize}
    \item $Q$ is a finite set of \defname{states}, partitioned into \defname{existential} states $E$ and \defname{universal} states $A$.
    \item $\Delta$ is a set of \defname{transitions} or \defname{productions} of form $X \to Y$ or $X\underset{a}{\to} Y$, for $X,Y \in Q$ and $a \in \Alphabet$.
    \item $X_0 \in Q$ is the \defname{initial} state.
    \item $c: Q \to \{0, \dots, n\}$ is called the \defname{colouring}.
\end{itemize}
The semantics of APAs can actually be reduced to that of RLL expressions, but let us briefly recall a self-contained definition.
A \defname{run-tree} of a word $w\in \Alphabet^\omega$ is a tree of nodes of form $(v,X)$ where $v$ is a (infinite) suffix of $w$ and $X$ is a state, generated by:
\begin{itemize}
\item The root is $(w,X_0)$.
\item A node $(v,X)$, where $X\in E$, must have exactly one child, either $(v,X)$ with $X\to Y$ in $\Delta$, or $(v',X)$ with $X\underset a \to  Y$ in $\Delta$ and $v=av'$.
    \item A node $(v,X)$, where $X\in A$, must have children $(v,Y)$ for all transitions $X \to Y$ in $\Delta$ and children $(v',Y)$ for all transitions $X\underset a \to Y$ if $v = av'$.\anupam{important subtlety: semantics of the universal quantification: for each letter or for each transition? ultimately makes no difference.}
\end{itemize}
An infinite path $\pi$ of a run-tree from $w$ is \defname{accepting} $w$ if the least colour of state occurring infinitely often along it is even (otherwise $\pi$ is \defname{rejecting}).
An APA $\mathbf A$ \defname{accepts} $w$ if there is a run-tree from $w$ in which every infinite path is accepting (otherwise $\mathbf A$ \defname{rejects} $w$).
We write $\lang {\mathbf A}$ for the set of $\omega$-words accepted by $\mathbf A$.

Notably, APAs comprise another characterisation of the $\omega$-regular languages:
\begin{theorem}
    [See, e.g., \cite{Bojan23course}]
    \label{thm:apa-omega-regular}
    Let
    $A \subseteq \Alphabet^\omega$. $A$ is $\omega$-regular $\iff$ there is an APA $\mathbf A$ with $\lang {\mathbf A} = A$.
\end{theorem}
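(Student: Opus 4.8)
The plan is to prove the two implications separately, treating the right-to-left direction (elimination of alternation) as the substantial one.

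For the forward direction ($\Rightarrow$) I would exploit the machinery already in place. By \cref{prop:omega-reg-have-munu-exps}, any $\omega$-regular $A$ equals $\lang e$ for some guarded RLL-expression $e$, so it suffices to translate a guarded expression into an APA computing the same language. This translation is essentially syntactic, since APAs and expressions are designed to mirror one another: take the subexpressions of $e$ as states, let $+$ induce existential branching and $\cap$ universal branching, let a prefix $a\cdot$ induce a letter-transition $\underset{a}{\to}$, and let the nesting of $\mu$'s and $\nu$'s determine the colouring (odd colours for $\mu$-variables, even for $\nu$-variables, ordered by alternation depth so that the parity condition recovers the fixed-point semantics). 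One then checks $\lang{\mathbf A}=\lang e$ by comparing run-trees of $\mathbf A$ with the fixed-point unfolding of $e$. Alternatively, and more cheaply, one may observe that every $\omega$-regular language is recognised by a nondeterministic B\"uchi automaton, and that such automata are literally the special case of APAs with only existential states and a two-colour parity condition (accepting states coloured $0$, the rest coloured $1$).

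The backward direction ($\Leftarrow$) is the crux and amounts to \emph{eliminating alternation}. Given an APA $\mathbf A$, membership $w\in\lang{\mathbf A}$ is exactly the statement that the existential player $\Eloise$ wins the acceptance parity game on the run-trees of $w$. I would first convert $\mathbf A$ into an equivalent \emph{nondeterministic} parity automaton $\mathbf A'$: by positional determinacy of parity games, whenever $\Eloise$ wins she has a positional strategy, and a nondeterministic device can guess such a strategy on the fly while tracking, in a finite ``macrostate'', the set of automaton-states currently reachable through $\Abelard$'s universal choices together with the relevant colour information. Verifying that every such branch satisfies the parity condition is a Safra-style determinisation \cite{Safra} carried out on the game arena; this is where the parity (rather than merely B\"uchi) acceptance condition makes the construction delicate. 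Having obtained $\mathbf A'$, I would conclude by the classical equivalence of nondeterministic parity automata with $\omega$-regular languages, so that $\lang{\mathbf A}=\lang{\mathbf A'}$ is $\omega$-regular.

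The main obstacle is precisely this alternation-removal step: the interleaving of a breakpoint/Safra construction with the parity condition. In keeping with the game-theoretic emphasis of the paper, the cleanest justification that the guessed positional strategy can be validated by a finite-state machine appeals to the B\"uchi--Landweber theorem \cite{BuchiLandweber} for the associated $\omega$-regular verification game, which guarantees a finite-memory winning strategy. Since the statement is standard (and cited to \cite{Bojan23course}), I would most likely isolate ``alternating parity $\to$ nondeterministic parity'' as a single lemma and defer its detailed proof to that reference, presenting only the forward translation in full, as it is the direction actually reused later in the development.
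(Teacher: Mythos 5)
You should first be aware that the paper contains no proof of this statement at all: it is imported as a textbook fact, with the citation to \cite{Bojan23course} standing in for the argument, and the only part of it the paper actually uses later is the right-to-left direction (every APA recognises an $\omega$-regular language), needed for the corollary that every $\lang e$ is $\omega$-regular. So there is nothing in the paper to compare your proposal against; your closing decision to defer the dealternation lemma to the cited reference is, in effect, exactly what the authors do for the whole theorem. Your sketch of the standard argument is correct in outline. For the left-to-right direction the cheap route is the right one: nondeterministic B\"uchi automata are literally APAs with only existential states and colours in $\{0,1\}$, and every $\omega$-regular language is B\"uchi-recognisable; your alternative route through \cref{prop:omega-reg-have-munu-exps} and the translation $e\mapsto \mathbf A_e$ of \cref{thm:apa-from-expr-is-equiv} also works and is not circular (that theorem is proved via the evaluation game, \cref{lem:eval}), but it is heavier than necessary. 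For the right-to-left direction, positional determinacy of the acceptance game plus on-the-fly guessing of $\Eloise$'s positional strategy, followed by a determinisation-based check that all branches of the guessed run-DAG satisfy the parity condition, is indeed the standard dealternation proof.

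Two inaccuracies are worth flagging, though neither is fatal given that you defer the details to the reference. First, your appeal to B\"uchi--Landweber is misplaced: the acceptance game on an arbitrary $w\in\Alphabet^\omega$ is played on an \emph{infinite} arena (positions carry suffixes of $w$), so that theorem, which concerns finite-state games, does not apply; what you need is positional determinacy of parity games on arbitrary arenas (which you also invoke), and the finite-state validation of the guessed strategy comes from determinisation of $\omega$-automata (McNaughton/Safra), not from B\"uchi--Landweber. Second, you have the reuse backwards: the direction the paper later relies on is precisely the one you defer, APA $\Rightarrow$ $\omega$-regular, not the forward translation. A last, minor, bookkeeping point: the paper's APAs allow $\epsilon$-transitions, so the dealternation must also handle run-tree paths that stall on a fixed suffix of the input.
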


\begin{remark}
[$\epsilon$-transitions and alternation]
    Note that we have allowed `$\epsilon$-transitions' of form $X\to Y$, in order to mimic the syntax of RLL-expressions as closesly as possible.
    While this is not a common choice, it is easy to see that it does not increase the ultimate expressivity of the model.
    Semantically, in the presence of $\epsilon$-transitions, infinite paths along run-trees are not required to exhaust the $\omega$-word being read.
    I.e.\ it is possible that an infinite path along a run-tree has first component stabilising at some particular suffix of the input, e.g.\ a path $\pi$ of a run-tree may have form $(abw,X_0), (bw, X_1), (w, X_2), (w,X_3), \dots$. 
    In this case, this path does not distinguish $abw$ from any other word $abw'$, in that $\pi$ is accepting if and only if $\pi' = (abw',X_0), (bw', X_1), (w', X_2), (w',X_3), \dots$ is accepting.
    From this point of view, for instance, we should set $\epsilon^\omega = \Alphabet^\omega$ in $\Lang$. 
    Note that this is consistent with the equation $e^\omega = \nu X (eX)$, as $\epsilon $ is the unit of concatenation.

    Another choice in our exposition is that we insist that each state is either existential or universal, rather than allowing the transition relation from a state, for each letter, to be a positive Boolean combination of states. 
The two models are easily seen to be equivalent in the presence of $\epsilon $-transitions.
The definition of APAs we have given, with $\epsilon$-transitions, existential and universal states, can be found in, e.g., \cite{Bojan23course}.
\end{remark}

We shall draw APAs in a similar fashion to usual finite word automata.

\begin{example}
[Expressions vs automata]
\label{ex:apas}
The following APAs compute the languages from \cref{ex:empty-and-universal,ex:omega-iteration}:
\[
\begin{array}{c@{}c@{}c@{}c}
\emptyset: & \Alphabet^\omega: & a^\omega: & \{a,b\}^\omega: \\
 \begin{tikzpicture}[baseline = (X)]
\node (nil) at (-1,0) {};
\node (lin) at (1,0) {};
    \node[circle, draw, label=below:{$\orange 1$}] (X) at (0,0) {$X$};
    \draw[->] (nil) edge (X);
    \draw (X) edge[loop above] (X);
\end{tikzpicture}
&
 \begin{tikzpicture}[baseline = (X)]
\node (nil) at (-1,0) {};
\node (lin) at (1,0) {};
    \node[circle, draw, label=below:{$\mgt 0$}] (X) at (0,0) {$X$};
    \draw[->] (nil) edge (X);
    \draw (X) edge[loop above] (X);
\end{tikzpicture}
&
 \begin{tikzpicture}[baseline = (X)]
\node (nil) at (-1,0) {};
\node (lin) at (1,0) {};
    \node[circle, draw, label=below:{$\mgt 0$}] (X) at (0,0) {$X$};
    \draw[->] (nil) edge (X);
    \draw (X) edge[loop above] node {$a$} (X);
\end{tikzpicture}
&
 \begin{tikzpicture}[baseline = (X)]
\node (nil) at (-1,0) {};
\node (lin) at (1,0) {};
    \node[circle, draw, label=below:{$\mgt 0$}, color=blue] (X) at (0,0) {$X$};
    \draw[->] (nil) edge (X);
    \draw (X) edge[loop above] node {$a,b$} (X);
\end{tikzpicture}
\\
\noalign{\smallskip}
\orange \mu X X & \mgt \nu X X & \mgt \nu X (aX) & \mgt\nu X (aX \blue + bX)
\end{array}
\begin{array}{l}
     \textbf{Key} \\
     \noalign{\smallskip}
     \begin{array}{r@{\ : \ }l}
     \blue \bigcirc & \text{existential state} \\
          \red \bigcirc & \text{universal state} \\
          \mgt n & \text{even colour} \\
          \orange n & \text{odd colour}
     \end{array}
\end{array}
\]
We have repeated the expressions from \cref{ex:empty-and-universal,ex:omega-iteration} for the corresponding languages above too, now with colouring suggestive of how APAs and RLL expressions correspond to each other (more on this later).
For states that are black/uncoloured, it does not matter whether they are universal or existential, as there is a unique transition from them.
We shall use the same colouring and notation conventions henceforth.

The following APAs compute the languages from \cref{ex:(in)finitely-many}:
  \[
    \begin{array}{c@{\qquad}c}
            \text{\emph{finitely many $a$s}} :
& 
\text{\emph{infinitely many $a$s}} :
         \\
\begin{tikzpicture}[baseline = (X)]
    \node[circle,color=blue, draw, label=below:{$\orange 1$}] (X) at (0,0) {$X$}; 
    \node[right=1cm of X, circle, draw, label=below:{$\mgt 2$}] (Y) {$Y$};
    \node[left=.5cm of X] (nil) {};
    \node[right=.5cm of Y] (lin) {};
    \draw[->] (nil) to (X);
    \draw[->] (X) edge (Y);
    \draw (X) edge [loop above] node {$a,b$} (X);
\draw (Y) edge [loop above] node {$b$} (Y);
\end{tikzpicture}
&
\begin{tikzpicture}[baseline = (X)]
    \node[circle, color=blue, draw, label=below:{$\mgt 0$}] (X) at (0,0) {$X$}; 
    \node[right=1cm of X, circle, draw, label=below:{$\orange 1$}, color=blue] (Y) {$Y$};
    \node[left=.5cm of X] (nil) {};
    \node[right=.5cm of Y] (lin) {};
    \draw[->] (nil) to (X);
    \draw[->, bend right] (X) edge node[below] {$b$} (Y);
    \draw[->, bend right] (Y) edge node[above] {$a$} (X);
    \draw (X) edge [loop above] node {$a$} (X);
\draw (Y) edge [loop above] node {$b$} (Y);
    \end{tikzpicture}
\\
\noalign{\smallskip}
         \orange \mu X (aX \blue + bX \blue + \mgt \nu Y (bY)) &
          \mgt \nu X \orange \mu Y (aX \blue + bY) 
    \end{array}
    \]
    \todo{spacing hacky below, better to fix}
    \[
\begin{tikzpicture}[baseline=(root)]
    \node[circle, draw, color=red] (root) at (0,0) {$\phantom{\cap\  }$ };
    \node[right=.5cm of root, yshift=.9cm, circle, draw, color=blue, label=below:{$\mgt 0$}] (Xa) {$X_a$};
    \node[right=1cm of Xa, circle, draw, color=blue,label=below:{$\orange 1$}] (Ya) {$Y_a$};
        \node[left=.5cm of root] (nil) {};
    \node[right=.5cm of Ya] (lin) {};
    \draw[->] (nil) to (root);
    
    \node[right=.5cm of root, yshift=-.9cm, circle, draw, color=blue,label=above:{$\orange 1$}] (Xb) {$X_b$};
    \node[right=1cm of Xb, circle, draw,label=above:{$\mgt 2$}] (Yb) {$Y_b$};
    
    \draw[->] (root) edge[bend left] (Xa);
    \draw[->] (Xa) edge[bend left] node[above] {$b$} (Ya);
    \draw[->] (Ya) edge[bend left] node[below] {$a$} (Xa);
    \draw (Xa) edge[loop above] node[above] {$a$} (Xa);
    \draw (Ya) edge[loop above] node[above] {$b$} (Ya);

    \draw[->] (root) edge[bend right] (Xb);
    \draw[->] (Xb) edge (Yb);
    \draw (Xb) edge[loop below] node[below] {$a,b$} (Xb);
    \draw (Yb) edge[loop below] node[below] {$a$} (Yb);


    \node[left=.7cm of root, yshift=1cm] {\emph{infinitely many $a$s}};
    \node[left=.7cm of root] {$\text{\emph{and}} \ \ \ \ \ \ \ \ \  $ :};
    \node[left=.7cm of root, yshift=-1cm] {\emph{finitely many $b$s}$\ \ $};

    \node[right=.5cm of Ya] {$\ \ \ \mgt \nu X_a \orange \mu Y_a \left(aX_a \blue + bY_a\right)$};
    \node[right=.5cm of Ya, yshift=-.9cm] {$\ \ \ \ \ \ \ \ \ \ \ \ \ \  \red \cap $};
    \node[right=.5cm of Yb] {$\orange \mu X_b (aX_b \blue + bX_b \blue + \mgt \nu Y_b (bY_b)$};
\end{tikzpicture}
\]

\end{example}

The preceding example suggested an association between expressions and APAs. Let us now make this more formal.

\begin{definition}
[Fischer-Ladner]
\label{def:fl}
Define $\flred$ as the smallest relation on RLL expressions satisfying:
\begin{itemize}
    \item $ae \flred e$.
    \item $e_0\star e_1 \flred e_i$, for $i\in \{0,1\}$ and $\star \in \{+,\cap\}$.
    \item $\sigma X e(X) \flred e(\sigma Xe(X))$, for $\sigma \in \{\mu,\nu\}$.
\end{itemize}
Write $\leqfl$ for the reflexive transitive closure of $\flred$.
The \defname{Fischer-Ladner ($\FL$) closure} of an expression $e$, written $\fl e$, is $\{f\leqfl e\}$.
We also write $e\subform f$ if $e$ is a subformula of $f$, in the usual sense.
\end{definition}

It is well-known that $\fl e $ is always finite. 
This follows by induction on the structure of $e$, relying on the equality $\fl{\sigma X e} = \{\sigma X e\} \cup \{f [ \sigma X e / X] : f \in \fl e\}$ (see, e.g., \cite{DD24a,DasDe24:rla-preprint} for further details).

From here we can readily associate to any expression $e$ an automaton $\mathbf A_e$ with:
\begin{itemize}
    \item \emph{States}:  $\fl e$, with expressions $0,f+g$ existential and expressions $\top,f\cap g$ universal.\footnote{Again, it does not matter whether other expressions are existential or universal states, as there is a unique instance of $\flred$ from them.} The initial state is $e$.
    \item \emph{Transitions}: 
    \begin{itemize}
        \item $af \underset a \to f$ whenever $af \in \fl e$; and,
        \item $g \to g'$ whenever $g\flred g'$ and $g$ is not of form $af$.
    \end{itemize} 
    \item \emph{Colouring}: any function $c_e: \fl e \to \Nat $ s.t.:
    \begin{itemize}
        \item $c_e$ is monotone wrt subformulas, i.e.\ if $f\subform g \implies c(f) \leq c(g)$; and,
        \item $c_e$ assigns $\mu$ and $\nu$ formulas odd and even numbers, respectively, i.e.\ always $c_e(\mu Xf(X))$ is odd and $c_e(\nu X f(X))$ is even.
    \end{itemize}
\end{itemize}



\todo{commented dependency colouring here}

The APAs given in \cref{ex:apas} are just simplifications of $\mathbf A_e$, for the given associated expression $e$.
As expected we have:

\begin{theorem}
\label{thm:apa-from-expr-is-equiv}
    For closed RLL expressions $e$, $\lang e = \lang{\mathbf A_e}$.
\end{theorem}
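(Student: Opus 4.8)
The plan is to establish both inclusions at once by passing to the \emph{acceptance game} of $\mathbf A_e$ and showing it is adequate for the semantics of \cref{dfn:lang-semantics}. Fix $w\in\Alphabet^\omega$ and consider the two-player parity game $G_w$ whose positions are pairs $(v,f)$ with $f\in\fl e$ and $v$ a suffix of $w$: Eve moves from the existential positions $0$ and $f_0+f_1$, Adam from the universal positions $\top$ and $f_0\cap f_1$, while the letter-consuming steps $af\flred f$ and the fixed-point unfoldings $\sigma X f\flred f(\sigma X f)$ are forced; the dead ends $0$ and $\top$ are immediate losses for Eve and Adam respectively, and infinite plays are scored by the parity condition induced by $c_e$. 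Unwinding the definitions of run-tree and acceptance shows directly that Eve wins $G_w$ from $(w,e)$ if and only if $\mathbf A_e$ accepts $w$. The theorem therefore reduces to the adequacy statement: Eve wins $G_w$ from $(w,e)$ iff $w\in\lang e$, which I would prove by the standard signature (approximant) method.

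For $\lang e\subseteq\lang{\mathbf A_e}$ I would turn membership into a winning strategy for Eve. Reading the fixed-point clauses of \cref{dfn:lang-semantics} as (co)inductive definitions, I annotate each position $(v,f)$ Eve visits with an ordinal signature recording, for every $\mu$-subformula currently regenerated, the least approximant stage at which the suffix $v$ enters $\lang f$. Eve plays to preserve the invariant $v\in\lang f$---choosing at $f_0+f_1$ a disjunct with $\lang{f_0+f_1}=\lang{f_0}\cup\lang{f_1}\ni v$, and using the unfolding identity $\lang{\sigma X f}=\lang{f(\sigma X f)}$ at the forced fixed-point steps---in such a way that the signature never increases and strictly decreases whenever a $\mu$-formula is unfolded. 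On any infinite play the fixed point that is unfolded infinitely often and is outermost cannot be a $\mu$, by infinite descent of ordinals, so it is a $\nu$, and Eve wins.

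The converse $\lang{\mathbf A_e}\subseteq\lang e$ is exactly dual: assuming $w\notin\lang e$, I would build from the $\nu$-approximants a strategy for Adam maintaining $v\notin\lang f$ and forcing the outermost infinitely-unfolded fixed point to be a $\mu$, so that Adam wins $G_w$. Since parity games are determined, exactly one of the two players wins from $(w,e)$, and the two constructions together yield the equivalence.

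I expect the crux to be not the game set-up but the \emph{colour-alignment} lemma: that the \textbf{least} colour seen infinitely often coincides with the \emph{outermost} fixed point regenerated infinitely often, with even colour iff that fixed point is a $\nu$. This is where the substitution identity $\fl{\sigma X f}=\{\sigma X f\}\cup\{g[\sigma X f/X]:g\in\fl f\}$ is decisive, since substituting an outer fixed point into the body turns it into a genuine subformula of the inner ones; by monotonicity of $c_e$ in the subformula order ($f\subform g\Rightarrow c_e(f)\le c_e(g)$) the outer fixed point thus receives the \emph{smaller} colour, so that `least colour infinitely often' tracks the dominant fixed point, and the convention assigning $\mu$ odd and $\nu$ even colours makes its parity correct. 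The signature arguments for both inclusions rest entirely on this alignment. I note that this adequacy is precisely what the \emph{evaluation games} of \cref{sec:eval-games} will provide, so the cleanest exposition may simply instantiate that later machinery at the automaton $\mathbf A_e$.
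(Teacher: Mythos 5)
Your proposal is correct and takes essentially the same route as the paper: the paper's proof likewise identifies the acceptance game for $\mathbf A_e$ with the evaluation game (an $\Eloise$ strategy from $(w,e)$ is exactly a run-tree, winning iff accepting) and then concludes by \cref{lem:eval}, whose proof in the paper is precisely the signature/approximant argument you sketch. Your closing remark is exactly the paper's organisation—the adequacy step is factored out into the Evaluation Lemma rather than inlined—and your colour-alignment concern is what \cref{prop:fl-props} together with the monotonicity conditions on $c_e$ is there to secure.
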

We shall delay the justification of this result to the next section, after building up some game theoretic machinery for RLL expressions mirroring the well-known \emph{acceptance games} for APAs. 
Before that let us point out that, as a consequence of this result together with \cref{thm:apa-omega-regular}, we duly obtain the converse of \cref{thm:apa-omega-regular}:

\begin{corollary}
    For any closed RLL expression $e$, $\lang e$ is $\omega$-regular.
\end{corollary}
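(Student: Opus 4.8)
The plan is to compose the two preceding results, \cref{thm:apa-from-expr-is-equiv} and \cref{thm:apa-omega-regular}. Given a closed RLL expression $e$, I would pass to the associated automaton $\mathbf A_e$ constructed just before \cref{thm:apa-from-expr-is-equiv}. Its states are the Fischer--Ladner closure $\fl e$, which is finite by the remark following \cref{def:fl} (using the identity $\fl{\sigma X e} = \{\sigma X e\}\cup\{f[\sigma X e/X] : f\in \fl e\}$ and induction on structure). Thus $\mathbf A_e$ has finitely many states and, after checking well-definedness, is a genuine APA.

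The one point deserving a line of justification is that an admissible colouring $c_e$ exists at all: we need $c_e\colon \fl e \to \Nat$ monotone with respect to the subformula order $\subform$ and assigning odd numbers to $\mu$-formulas and even numbers to $\nu$-formulas. Such a $c_e$ can be obtained by choosing any linearisation of the (finite) subformula partial order and reading off increasing colours along it, incrementing by one where necessary to meet the parity constraint at a fixed-point formula. Since the parity requirements constrain only $\mu/\nu$-formulas and never conflict with monotonicity, this is always achievable, so $\mathbf A_e$ is a legitimate APA.

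With $\mathbf A_e$ in hand, \cref{thm:apa-from-expr-is-equiv} gives $\lang e = \lang{\mathbf A_e}$, and the right-to-left direction of \cref{thm:apa-omega-regular} tells us that $\lang{\mathbf A_e}$, being recognised by an APA, is $\omega$-regular. Combining the two equalities yields that $\lang e$ is $\omega$-regular, as required.

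I expect essentially no obstacle at the level of this corollary: all the genuine content is already encapsulated in the cited \cref{thm:apa-omega-regular} and in the (deferred) \cref{thm:apa-from-expr-is-equiv}. The only self-contained verification is the well-definedness of $\mathbf A_e$ sketched above, which is routine; the substantive work lives entirely in the results it invokes.
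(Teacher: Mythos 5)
Your proposal is correct and follows exactly the paper's route: the corollary is obtained by composing \cref{thm:apa-from-expr-is-equiv} ($\lang e = \lang{\mathbf A_e}$) with the right-to-left direction of \cref{thm:apa-omega-regular}. Your additional verification that an admissible colouring $c_e$ exists (via linearising the finite subformula order) is a harmless and correct bit of extra care that the paper leaves implicit.
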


\begin{remark}
    [Automata to expressions]
    In fact \cref{prop:omega-reg-have-munu-exps} can itself be refined: RLL expressions can really be construed as a \emph{notation} for APAs. 
The representation of automata by right-linear expressions is detailed in \cite{DD24a,DasDe24:rla-preprint} for the case of non-deterministic finite word automata (NFAs) by RLA expressions. Essentially same argument works to represent APAs by RLL expressions, but a detailed development is beyond the scope of this work. 
At a very high level, an APA can be construed as a `hierarchical' system of equations (one for each transition), with states construed as variables (such systems have appeared in, e.g., \cite{BruseFL15:guarded-transformations,SeiNeu99:guard-nest-fps}). This system can be solved by (closed) fixed point expressions using \emph{Beki\'c's Lemma} \cite{Bekic84}. In the case of NFAs, the precise order in which we solve the variables is unimportant, as RLA expressions have only $\mu$ fixed points, not $\nu$. 
For APAs the order is forced by the colouring, requiring us to solve lower coloured states first, by $\nu$ if the state is coloured even and by $\mu$ if the state is coloured odd. \todo{could develop this formally in appendices}
\end{remark}

\section{Evaluation games}
\label{sec:eval-games}

As an engine for the necessary metalogical results later it is useful to first develop \emph{game-theoretic} characterisations of formula evaluation.
As a consequence we also recover a proof of \cref{thm:apa-from-expr-is-equiv}.

\subsection{More on Fischer-Ladner}

Write $\flredeq$ for the reflexive closure of $\flred$, i.e.\ $e \flred f $ if $e=f $ or $e\flred f$.
A \defname{trace} is a sequence $e_0 \flredeq e_1 \flredeq \cdots$.
We also write $e\lefl f$ if $e \leqfl f \not \leqfl e$.

We mentioned some properties of the Fischer-Ladner closure in the previous section. Let us collect these and more into a formal result:

\begin{proposition}
[Properties of $\FL$, see, e.g., \cite{StrEme89:aut-th-proc-mu-calc,KupMarVen22:fl-props}]
\label{prop:fl-props}
    We have:
    \begin{enumerate}
        \item\label{item:fl-finite} $\fl e$ is finite, and in fact has size linear in that of $e$.
        \item\label{item:fl-preorder} $\leqfl$ is a preorder and $\lefl$ is well-founded.
        \item\label{item:traces-have-least-inf-occ-elem} Every trace has a minimum infinitely occurring element, under $\subform$. 
        If a trace is not eventually stable, the minimum element has form $\mu X e $ or $\nu X e$.
    \end{enumerate}
\end{proposition}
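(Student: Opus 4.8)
The plan is to treat the three items in order, using (1) and (2) as quick warm-ups that feed the real work in (3).

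For (1) I would induct on the structure of $e$, the only interesting case being $e=\sigma X f$, where I invoke the stated identity $\fl{\sigma X f}=\{\sigma X f\}\cup\{g[\sigma X f/X]:g\in\fl f\}$. To get the \emph{linear} bound rather than mere finiteness, I would set up the sharper invariant that the substitution map $s\mapsto\bar s$ --- sending each subformula $s\subform e$ to the result $\bar s$ of replacing every free variable of $s$ by its (closed) binding fixpoint in $e$ --- is a surjection from the subformulas of $e$ onto $\fl e$, whence $|\fl e|\le|\mathrm{Sub}(e)|\le|e|$. This skeleton map is also the main technical device I would reuse in (3). For (2), reflexivity and transitivity of $\leqfl$ are immediate, since it is by definition a reflexive--transitive closure; for well-foundedness of $\lefl$ I would use the measure $f\mapsto|\fl f|\in\Nat$, finite by (1): if $g\lefl f$ then $\fl g\subseteq\fl f$ by transitivity while $f\in\fl f\setminus\fl g$, so $\fl g\subsetneq\fl f$ and $|\fl g|<|\fl f|$; an infinite $\lefl$-descending chain would then give an infinite descent in $\Nat$.

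Item (3) is the substance. Every element of a trace $e_0\flredeq e_1\flredeq\cdots$ lies in the finite set $\fl{e_0}$ by (1), so the set $I$ of infinitely occurring formulas is non-empty and finite, and I may pass to a tail visiting exactly the elements of $I$, each infinitely often. A first cheap reduction: since $e_{i+1}\leqfl e_i$, the $\eqfl$-classes form a weakly $\lefl$-descending sequence, which by (2) must stabilise, so a tail lies inside a single $\eqfl$-class and all of $I$ is $\eqfl$-equivalent. To pin down the minimum I would pass to the \emph{skeleton walk}: reading each $e_i$ as $\bar{s_i}$ for a subformula $s_i\subform e_0$, the trace becomes a walk on the finite syntax tree of $e_0$ whose steps go to children (the $a\cdot$, $+$, $\cap$ and fixpoint-body reductions) except at a bound-variable leaf $X$, where it jumps \emph{up} to the binder node $\sigma X(\cdots)$ (a fixpoint unfolding). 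I would then let $B^\ast=\sigma X^\ast t^\ast$ be the $\subform$-highest binder node that is the target of infinitely many such jumps and claim that $\overline{B^\ast}$ is the required minimum, necessarily a $\mu$- or $\nu$-formula whenever the trace is not eventually constant.

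The crux, and the step I expect to be the main obstacle, is showing that $\overline{B^\ast}$ really is $\subform$-below \emph{every} element of $I$ --- equivalently, that $I$ cannot contain two $\subform$-incomparable minimal (hence, by the above analysis, fixpoint) formulas. The mechanism is a connectivity argument on the skeleton walk: to travel infinitely often between the two subtrees hosting two incomparable infinitely-visited nodes, the walk must cross their common ancestor infinitely often, and since the only upward moves are jumps to binder nodes, that common ancestor is itself a binder unfolded infinitely often; iterating, the binders unfolded infinitely often are totally ordered by nesting, with $B^\ast$ on top. The delicate dual point is that $X^\ast$ must occur free in every infinitely-visited skeleton $s_i$, so that indeed $\overline{B^\ast}\subform\bar{s_i}$: if some infinitely-visited region of $t^\ast$ were $X^\ast$-free, I must verify that once the walk dwells there it can never jump back out to $B^\ast$ --- it would be trapped, contradicting that $B^\ast$ is visited infinitely often. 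I would formalise this by observing that exiting a subtree upward past a binder $B'$ requires a jump to an ancestor of $B'$, and chase this to force the trapping. With the minimum identified as $\overline{B^\ast}$, its being a genuine fixpoint in the non-eventually-constant case is immediate, since $B^\ast$ is then an internal binder node with $X^\ast$ occurring in its body.
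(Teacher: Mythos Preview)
For (1) and (2) you match the paper: it too invokes $\fl{\sigma X e}=\{\sigma X e\}\cup\{g[\sigma X e/X]:g\in\fl e\}$ for (1) and dismisses (2) as immediate from the definitions. For (3), however, you take a genuinely different and more explicit route. The paper's entire argument is the observation $\flredeq\ \subseteq\ {\subform}\cup{\supform}$ together with an appeal to ``a more general property on well partial orders: any path along $\subform\cup\supform$ must have a $\subform$-minimum'', with no further detail; read literally that general property fails (take the four-element poset with $a,b<c$ and $a,b<d$, $a\parallel b$, $c\parallel d$, and the path $a,c,b,d,a,c,\ldots$), so the paper is effectively deferring to the cited references. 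Your skeleton-walk analysis is precisely the standard argument from those references and is correct; you also correctly isolate the one delicate step, namely the trapping argument forcing $X^\ast$ to be free in every infinitely-visited skeleton. What your approach buys is a self-contained proof; what the paper's sketch buys is brevity and a pointer to where the details live. One small point to tighten: the map $s\mapsto\bar s$ should have subformula \emph{occurrences} (positions in the syntax tree) as its domain rather than subformulas-as-formulas, since the same open formula at two positions can sit under different binders (e.g.\ $aX$ in $(\mu X\,aX)+(\nu X\,aX)$), making ``its binding fixpoint in $e$'' position-dependent; your later talk of a ``walk on the finite syntax tree of $e_0$'' shows you already have this in mind, so just make the domain explicit.
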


\begin{proof}
    [Proof idea]
    \ref{item:fl-finite} follows by straightforward structural induction on $e$, noting that $\fl{\sigma X e}= \{\sigma X e\} \cup \{f[\sigma X e /X] : f \in\fl{e} \}$.
    \ref{item:fl-preorder} is immediate from the definitions.  
    For \ref{item:traces-have-least-inf-occ-elem} note that $\flredeq \ \subseteq\  \subform \cup \supform$, whence the property reduces to a more general property on well partial orders: any path along $\subform \cup \supform$ must have a $\subform$-minimum.
\end{proof}

We call the smallest infinitely occurring element of a trace its \defname{critical} formula.
If a trace is not ultimately stable, we call it a \defname{$\mu$-trace} or \defname{$\nu$-trace} if its critical formula is a $\mu$-formula or a $\nu$-formula, respectively.

\subsection{The evaluation game}
In this subsection we define games for evaluating expressions, similar in spirit to \emph{acceptance games} for APAs. 

\begin{definition}
    [Evaluation Game]
    The \defname{Evaluation Game} is a two-player game, played by Eloise ($\Eloise$) and Abelard ($\Abelard$).
    The positions of the game are pairs $(w,e)$ where $w\in \Alphabet^\omega$ and $e$ is an expression. 
    The moves of the game are given in \cref{fig:eval-rules}.\footnote{For positions where a player is not assigned, the choice does not matter as there is a unique available move.}

    An infinite play of the evaluation game is \defname{won} by $\Eloise$ (aka \defname{lost} by $\Abelard$) if the smallest expression occurring infinitely often (in the right component) is a $\nu$-formula.
    (Otherwise it is won by $\Abelard$, aka lost by $\Eloise$.)

    If a play reaches deadlock, i.e.\ there is no available move, then the player who owns the current position loses.
\end{definition}

Note that property \eqref{item:traces-have-least-inf-occ-elem} from \cref{prop:fl-props} justifies our formulation of the winning condition in the evaluation game: the right components of any play always form a trace that is never stable, by inspection of the available moves. 
Thus it is either a $\mu$-trace or a $\nu$-trace.

Note that winning can be formulated as a parity condition, assigning priorities consistent with the subformula ordering and with $\mu$ and $\nu$ formulas having odd and even priorities, respectively, just like for the APAs $\mathbf A_e$ we defined earlier.
It is well-known that parity games are positionally determined,  i.e.\ if a player has a winning strategy from some position, then they have one that depends only on the current position, not the previous history of the play (see, e.g., \cite{GTW03,PerPin:inf-word-aut-book}). Thus:
\begin{observation}
\label{obs:eval-game-is-pos-det}
    The Evaluation Game is positionally determined.
\end{observation}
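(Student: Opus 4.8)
The plan is to exhibit the Evaluation Game as a \emph{parity game} and then invoke the classical positional determinacy of such games. Concretely, I would fix—exactly as for the automaton $\mathbf A_e$—a colouring $c$ of expressions into $\Nat$ that is monotone with respect to the subformula order (so $f \subform g$ implies $c(f) \le c(g)$) and that assigns odd colours to $\mu$-formulas and even colours to $\nu$-formulas. Assigning to each position $(w,e)$ the priority $c(e)$ turns the moves of \cref{fig:eval-rules} into a game graph with a min-parity objective, namely: $\Eloise$ wins an infinite play iff the least priority occurring infinitely often is even.

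The key verification is that this parity objective coincides with the stated winning condition. By inspection of the moves, the right components of any infinite play form a trace that is never stable, so by \cref{prop:fl-props}\eqref{item:traces-have-least-inf-occ-elem} it has a critical formula $g$, namely the $\subform$-minimum of its infinitely occurring expressions, and $g$ is a $\mu$- or $\nu$-formula. Since $g$ is a subformula of every infinitely occurring expression and $c$ is $\subform$-monotone, $c(g)$ is exactly the least priority occurring infinitely often. As $g$ is a $\mu$- or $\nu$-formula, $c(g)$ is even iff $g$ is a $\nu$-formula. Hence the least infinitely-occurring priority is even precisely when the critical formula is a $\nu$-formula, which is precisely when $\Eloise$ wins; finite plays ending in deadlock are governed by the owner-loses convention, matching the standard sink convention in parity games.

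Finally, any play from a position $(w,e)$ keeps its right component inside the finite set $\fl e$, so the subgame reachable from $(w,e)$ uses only finitely many priorities, albeit over a possibly infinite arena (the first component ranges over $\Alphabet^\omega$). The Emerson--Jutla/Mostowski positional determinacy theorem for parity games (see, e.g., \cite{GTW03,PerPin:inf-word-aut-book}) then yields a positional winning strategy from $(w,e)$ for whichever of $\Eloise,\Abelard$ wins there, and this is directly a positional strategy in the Evaluation Game.

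The step I expect to be most delicate is the coincidence of winning conditions in the second paragraph—specifically the claim that the critical formula realises the minimal infinitely-occurring colour—since it relies crucially on both the $\subform$-monotonicity of $c$ and on \cref{prop:fl-props}\eqref{item:traces-have-least-inf-occ-elem} guaranteeing that the critical formula is genuinely a fixed-point formula (and hence has a well-defined parity). The only other point needing care is that the arena is infinite, so I must make sure to cite the version of positional determinacy valid for arbitrary game graphs with finitely many priorities, rather than a finite-arena statement.
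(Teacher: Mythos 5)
Your proposal is correct and follows essentially the same route as the paper: the paper likewise justifies the observation by noting that the winning condition is a parity condition (priorities consistent with the subformula ordering, $\mu$ odd, $\nu$ even, as for $\mathbf A_e$) and then invoking the classical positional determinacy of parity games via \cite{GTW03,PerPin:inf-word-aut-book}. Your additional verifications—that the critical formula realises the least infinitely occurring priority, and that positional determinacy must be taken in its arbitrary-arena form since the first components range over $\Alphabet^\omega$—are details the paper leaves implicit, and they check out.
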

Indeed, by a standard well-ordering argument, there is a \emph{universal} positional winning strategy for $\Eloise$, one that wins from each winning position. Similarly for $\Abelard$.

\begin{figure}
    \centering
    \begin{tabular}{|c|c|c|}
        \hline
        Position  & Player &  Available moves
        \\\hline
       $(aw,ae)$ & - &  $(w,e)$ \\
       $(aw,be)$ with $a\neq b$ & $\Eloise$ &  \\
       $(w,0)$ & $\Eloise$ & \\
       $(w,\top)$ & $\Abelard$ & \\
       $(w,e+f)$ & $\Eloise$ & $(w,e)$, $(w,f)$ \\
       $(w,e\cap f)$ & $\Abelard$ & $(w,e)$, $(w,f)$ \\
       $(w, \mu X e(X))$ & -  & $(w,e(\mu X e(X))$ \\
       $(w, \nu X e(X))$ & - & $(w,e(\nu X e(X))$ 
       \\
        \hline
    \end{tabular}    
    \caption{Rules of the evaluation game.}
    \label{fig:eval-rules}
\end{figure}

As suggested by its name, the Evaluation Game is adequate for $\Lang$, the main result of this subsection:

\begin{lemma}
[Evaluation]
\label{lem:eval}
    $w \in \lang e $ $\iff$ Eloise  has a winning strategy from $(w,e)$. (Otherwise, by determinacy, Abelard has a winning strategy from $(w,e)$).
\end{lemma}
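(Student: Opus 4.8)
The plan is to prove the biconditional by relating the Evaluation Game to the fixed-point semantics given by $\Lang$, exploiting the positional determinacy from \cref{obs:eval-game-is-pos-det}. Since the game is determined, it suffices to prove one implication cleanly and obtain the other by contraposition against the opponent's strategy; concretely, I would prove that $w \in \lang e$ implies $\Eloise$ wins from $(w,e)$, and symmetrically (by the $\mu$/$\nu$, $+$/$\cap$, $0$/$\top$, $\Eloise$/$\Abelard$ duality of the whole setup) that $w \notin \lang e$ implies $\Abelard$ wins from $(w,e)$. Together with determinacy these give the equivalence. The genuinely new content is thus a single \emph{adequacy-of-strategy} argument, which the duality then lets me invoke twice.

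The core difficulty is that a single play unfolds fixed points infinitely often, so a naive induction on expression structure does not terminate: the winning condition is a global parity condition on the infinitely-occurring critical formula, not a local syntactic fact. The standard way to handle this, which I would follow, is a \emph{signature} (or unfolding-ordinal) argument. I would first fix universal positional strategies for both players (available by the remark after \cref{obs:eval-game-is-pos-det}). To show $\Eloise$'s winning positions contain $\{(w,e) : w \in \lang e\}$, I would annotate each $\mu$-formula with an ordinal \emph{approximant} bound: since $\lang{\mu X e(X)}$ is the least fixed point, membership $w \in \lang{\mu X e(X)}$ is witnessed at some ordinal stage of the Knaster--Tarski iteration, and unfolding strictly decreases this ordinal witness. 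I would let $\Eloise$ play to preserve the invariant ``the current position $(v,f)$ satisfies $v \in \lang f$,'' which is maintained under every move type by the semantic clauses of \cref{dfn:lang-semantics} (for $+$ she chooses the disjunct that holds; for $\cap$ either choice keeps the invariant since both conjuncts hold; for $a\cdot$ the letter matches since $v\in\lang{af}$ forces $v=av'$; for $\mu,\nu$ the fixed-point equation $\lang{\sigma X e}=\lang{e(\sigma X e)}$ preserves it). The invariant guarantees $\Eloise$ never deadlocks (she is never stuck at $(w,0)$ or a letter mismatch, as those have empty language). The remaining step is to rule out $\mu$-traces: along any infinite play the invariant holds throughout, and I would argue that a $\mu$-trace, whose critical formula $\mu X g$ recurs infinitely as the $\subform$-minimum, forces an infinitely-decreasing sequence of ordinal approximant-witnesses for membership in $\lang{\mu X g}$, contradicting well-foundedness of the ordinals.

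The main obstacle I anticipate is making this ordinal-signature bookkeeping precise when several nested fixed points are in play: the critical formula of a $\mu$-trace is the $\subform$-minimal formula occurring infinitely often (by \cref{prop:fl-props}\ref{item:traces-have-least-inf-occ-elem}), and I must ensure that passing through \emph{larger} formulas infinitely often does not corrupt the descending measure attached to the critical $\mu$-formula. The clean fix is to assign to each position a signature indexed by the (finitely many) $\mu$-subformulas of $e$, ordered by the subformula ordering with $\subform$-smaller priorities more significant, and track the tuple of approximant-ordinals; unfolding the critical $\mu$-formula strictly decreases its coordinate while resetting only coordinates of strictly larger (less significant) formulas, which is exactly what a lexicographic-style argument on the well-ordered signature needs. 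I would structure this as: (1) define approximant semantics $\lang{\mu^\alpha X e}$ and note $w\in\lang{\mu X e}\iff w\in\lang{\mu^\alpha X e}$ for some $\alpha$; (2) strengthen $\Eloise$'s invariant to carry the signature; (3) show every game move does not increase the signature and unfolding the trace-minimal $\mu$ strictly decreases it; (4) conclude no $\mu$-trace survives, so every infinite $\Eloise$-play is a $\nu$-trace and hence won. The dual argument for $\Abelard$ on $\nu$-approximants, via $w\notin\lang{\nu X e}$, then completes the proof, and I would remark that \cref{thm:apa-from-expr-is-equiv} follows since the Evaluation Game from $(w,e)$ is essentially the acceptance game of $\mathbf A_e$ on $w$.
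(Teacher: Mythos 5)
Your proposal is correct and follows essentially the same route as the paper's own proof: both arguments introduce approximant semantics for the fixed points, assign lexicographically ordered signatures indexed by the $\mu$-formulas (dually, $\nu$-formulas) of $\fl e$, have $\Eloise$ preserve membership while choosing summands of least signature, rule out $\mu$-traces by well-foundedness of the signature order, and obtain the converse by the dual $\Abelard$ construction together with determinacy. The only difference is cosmetic bookkeeping in how the priority order on fixed-point formulas is set up (you take the subformula order as more significant, where the paper uses its dependency order built from $\leqfl$ and $\supform$), which does not change the substance of the argument.
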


The proof of this result uses relatively standard but involved techniques, requiring a detour through a theory of approximants and signatures when working with fixed point logics, inspired by previous work on the modal $\mu$-calculus such as \cite{StrEme89:aut-th-proc-mu-calc,NiwWal96:games-mu-calc}. 
Roughly, for the $\implies$ direction, we construct a winning $\Eloise$-strategy by preserving language membership whenever making a choice at a $+$-state $(w,e+f)$.
However this is not yet enough: if \emph{both} $w \in \lang e$ and $w \in \lang f$, we must make sure to `decrease the witness' of membership. 
E.g.\ the $\Eloise$ strategy that loops on $(w, \mu X(\top+X)) $ does not win despite $w \in \lang {\mu X(\top+X)} = \lang \top = \Alphabet^\omega$: at some point we must choose the move $(w, \top + \mu X(\top+X)) \rightarrow (w, \top)$ to win.
Formally such a `witness' is given by an \emph{approximant} of a fixed point. For instance if $w \in \lang{\mu X e(X)}$ then we consider the least ordinal $\alpha$ such that $w \in \lang{e^\alpha(0)}$, appropriately defined. 
We can assign such approximations to \emph{every} least fixed point of an expression, \emph{signatures}, lexicographically ordered according to a `dependency order' induced by $\leqfl$, and always make choices at $+$-states according to least signatures.
The $\impliedby$ direction is completely dual, constructing a winning $\Abelard$-strategy, under determinacy, by approximating greatest fixed points instead of least.

We shall give a proof of \cref{lem:eval} in the next subsection, but the reader familiar with such results may safely skip it.
Before that, let us point out one useful consequence of the Evaluation Lemma: it yields immediately the $\omega$-regularity of languages denoted by RLL expressions:

\begin{proof}
    [Proof sketch of \cref{thm:apa-from-expr-is-equiv}]
    The evaluation game for an expression $e$ is just the acceptance game (see, e.g., \cite{Bojan23course}) for the APA $\mathbf A_e$.
    More directly, an $\Eloise$ strategy from $(w,e)$ is just a run-tree from $(w,e)$ in $\mathbf A_e$, and the former is winning if and only if the latter is accepting.
    From here we conclude by \cref{lem:eval}.
\end{proof}

\todo{MSOL proof commented below}

\subsection{Proof of the Evaluation Lemma}
A key point for proving \cref{lem:eval} is the fact that least and greatest fixed points admit a dual characterisation as limits of approximants.
%
    The Knaster-Tarski theorem tells us that, for any complete lattice $(L,\leq)$ and monotone operation $f:L\to L$, there is a least fixed point $\mu f = \bigwedge \{A\geq f(A)\}$ and a greatest fixed point $\nu f = \bigvee \{A \leq f(A)\}$. (More generally, the set $F$ of fixed points of $L$ itself forms a complete sublattice.)
    However $\mu f$ and $\nu f $ can alternatively defined in a more iterative fashion.

    First, for $A\in L$ and $\alpha $ an ordinal, define the \defname{approximants} $f^\alpha(A)$ and $f_\alpha(A)$ by transfinite induction on $\alpha$ as follows,
    \[
    \begin{array}{r@{\ \df \ }l}
         f^0(A) & A  \\
         f^{\alpha + 1}(A) & f(f^\alpha(A)) \\
         f^\lambda (A) & \bigvee\limits_{\alpha <\lambda}f^\alpha(A)
    \end{array}
\qquad
    \begin{array}{r@{\ \df \ }l}
         f_0(A) & A  \\
         f_{\alpha + 1}(A) & f(f_\alpha(A)) \\
         f_\lambda (A) & \bigwedge\limits_{\alpha <\lambda}f_\alpha(A)
    \end{array}
    \]
    where $\lambda$ ranges over limit ordinals. It turns out that we have
    \[
    \begin{array}{r@{\ = \ }l}
         \mu f & \bigvee\limits_{\alpha} f^\alpha(\bot_L) \\
        \nu f & \bigwedge \limits_{\alpha} f_\alpha (\top_L)
    \end{array}
    \]
    where $\bot_L$ and $\top_L$ are the least and greatest elements, respectively, of $(L,\leq)$, and $\alpha $ ranges over all ordinals. (In fact it suffices to bound the range by the cardinality of $L$, by the transfinite pigeonhole principle).

    This viewpoint often provides a more intuitive way to compute fixed points, in particular for calculating $\lang e$.
    For instance the calculations from \cref{ex:(in)finitely-many} are naturally recast as the computation of fixed points by approximants.\todo{say more? Give example again?}

\medskip

Now let us turn to proving \cref{lem:eval}. Recall the subformula ordering $\subform$ and the FL ordering $\leqfl$ we introduced earlier.
Let us introduce a standard ordering of fixed point formulas (see, e.g., \cite{StrEme89:aut-th-proc-mu-calc,KupMarVen22:fl-props}):

\begin{definition}
    [Dependency order] \label{def:dependency-order}
    The \defname{dependency order} on closed expressions, written $\dleq$, is defined as the lexicographical product $\leqfl \times \supform$.
    I.e.\ $e \preceq f$ if either $e\lefl f$ or $e\eqfl f $ and $f \subform e$.
\end{definition}\anupam{It would be better if the dependency order was consistent with the subformula ordering, due to the way progressing traces / winning plays are defined (in terms of subformula), so that the induced parity condition matches. This amounts to inverting the current formulation of dependency order: smaller expressions, according to the order, should be more important, not larger ones.}

Note that, by properties \ref{item:fl-finite} and \ref{item:fl-preorder} of \cref{prop:fl-props}, we have that $\dleq$ is a well partial order on expressions. 
In the sequel we assume an arbitrary extension of $\dleq$ to a total well-order $\leq$.

\begin{definition}
    [Signatures]
    Let $M$ be a finite set of $\mu$-formulas $\{\mu X_0 e_0  > \cdots > \mu X_{n-1} e_{n-1}\}$.
    An $M$-\defname{signature} (or $M$-\defname{assignment}) is a sequence $\vec \alpha$ of ordinals indexed by $M$.
    Signatures are ordered by the lexicographical product order.
    An $M$-\defname{signed} formula is an expression $e^{\vec \alpha}$, where $e$ is an expression and $\vec \alpha$ is an $M$-signature.
%
For $N$ is a finite set of $\nu$-formulas we define $N$-signatures similarly and use the notation $e_{\vec \alpha}$ for $N$-signed formulas.
\end{definition}

We evaluate signed formulas in $\Lang$ just like usual formulas, adding the clauses,
\begin{itemize}
\item $\wlang {(\mu X_i e_i(X))^{\vec \alpha_i 0 \vec \alpha^i}} \df \emptyset$.
    \item $\wlang {(\mu X_i e_i(X))^{\vec \alpha_i (\alpha_i + 1 ) \vec \alpha^i}} \df \wlang { (e_i (\mu X_i e_i(X)))^{\vec \alpha_i \alpha_i  \vec \alpha^i}}$.
    \item $\wlang{(\mu X_i e_i(X))^{\vec \alpha_i \alpha_i \vec \alpha^i}} \df \bigcup \limits_{\beta_i <\alpha_i} \wlang{(\mu X_i e_i(X))^{\vec \alpha_i \beta_i \vec \alpha^i}} $, when $\alpha_i$ is a limit.

    \medskip
    
\item $\wlang {(\nu X_i e_i(X))_{\vec \alpha_i 0 \vec \alpha^i}} \df \Alphabet^{\leq \omega}$.
    \item $\wlang {(\nu X_i e_i(X))_{\vec \alpha_i (\alpha_i + 1 ) \vec \alpha^i}} \df \wlang { (e_i (\nu X_i e_i(X)))_{\vec \alpha_i \alpha_i  \vec \alpha^i}}$.
    \item $\wlang{(\nu X_i e_i(X))_{\vec \alpha_i \alpha_i \vec \alpha^i}} \df \bigcap \limits_{\beta_i <\alpha_i} \wlang{(\nu X_i e_i(X))_{\vec \alpha_i \beta_i \vec \alpha^i}} $, when $\alpha_i$ is a limit.
\end{itemize}
where we are writing $\vec \alpha_i \df (\alpha_j)_{j<i}$ and $\vec \alpha^i \df (\alpha_j)_{j>i}$.

Since least and greatest fixed points can be computed as limits of approximants, and since expressions compute monotone operations in $\Lang$,
we have that, for any sets $M,N$ of $\mu,\nu$ formulas respectively:
\begin{itemize}
    \item $\lang e = \bigcup \limits_{\vec \alpha} \lang{e^{\vec \alpha}}$
    \item $\lang e = \bigcap \limits_{\vec \beta} \lang{e_{\vec \beta}}$
\end{itemize}
where $\vec \alpha $ and $\vec \beta$ range over all $M$-signatures and $N$-signatures, respectively.
Thus we have: \todo{say more by way of justification?}
\begin{proposition}
Suppose $e$ is an expression and $M,N$ the sets of $\mu,\nu$-formulas, respectively, in $\fl e$. We have:
\begin{itemize}
    \item If $w \in \wlang e$ then there is a least $M$-signature $\vec \alpha$ such that $w \in \wlang {e^{\vec \alpha}}$.
    \item If $w\notin \wlang e$ then there is a least $N$-signature $\vec \alpha$ such that $w \notin \wlang {e_{\vec \alpha}}$.
\end{itemize}
\end{proposition}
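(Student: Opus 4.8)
The plan is to observe that both clauses are instances of a single principle: a nonempty subset of a well-order has a least element. The two ingredients I need are, first, that the $M$-signature and $N$-signature orders are genuine well-orders, and second, that the union and intersection characterisations recorded immediately above the statement force the relevant sets of signatures to be nonempty. Everything then follows by reading off a minimum.

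First I would establish that the signature orders are well-founded. By property~\ref{item:fl-finite} of \cref{prop:fl-props} the set $\fl e$ is finite, so $M$ and $N$ are finite; say $|M| = m$. An $M$-signature is then an element of $\mathrm{Ord}^m$ equipped with the lexicographic product order. The key point is that the lexicographic product of \emph{finitely many} copies of the (well-ordered) class of ordinals is itself well-founded: given any nonempty set of signatures, one minimises the first coordinate, then among those realising that minimum minimises the second coordinate, and so on through all $m$ coordinates, producing a least element. I would state this as a small auxiliary fact, proved by induction on $m$, and flag explicitly that finiteness of the index set is essential here, since infinite lexicographic products of well-orders need not be well-founded. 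The same argument applies verbatim to $N$-signatures.

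Next I would derive nonemptiness from the displayed (co)limit characterisations. For the $\mu$-case, from $\wlang e = \bigcup_{\vec\alpha} \wlang{e^{\vec\alpha}}$ and the hypothesis $w \in \wlang e$, there is at least one $M$-signature $\vec\alpha$ with $w \in \wlang{e^{\vec\alpha}}$, so the set $\{\vec\alpha : w \in \wlang{e^{\vec\alpha}}\}$ is nonempty; by the well-ordering just established it has a least element, which is exactly the $\vec\alpha$ claimed. The $\nu$-case is dual and goes by contraposition: from $\wlang e = \bigcap_{\vec\beta} \wlang{e_{\vec\beta}}$, were $w$ to lie in $\wlang{e_{\vec\beta}}$ for \emph{every} $N$-signature $\vec\beta$ it would lie in the intersection $\wlang e$, contradicting $w \notin \wlang e$; hence $\{\vec\beta : w \notin \wlang{e_{\vec\beta}}\}$ is nonempty and again has a least element.

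I do not expect any genuine obstacle: the mathematical content is confined to the two observations above, and the only subtlety worth guarding against is the restriction to finite index sets in the well-ordering step. Accordingly I would keep the write-up short, isolating the well-ordering of finite lexicographic products as folklore (or a one-line lemma) and then reading off both clauses directly from the union and intersection characterisations, noting that monotonicity of the approximants is not even needed for mere existence of a least signature.
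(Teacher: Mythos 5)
Your proposal is correct and matches the paper's (largely implicit) argument: the paper presents this proposition as an immediate consequence of the displayed characterisations $\wlang e = \bigcup_{\vec\alpha}\wlang{e^{\vec\alpha}}$ and $\wlang e = \bigcap_{\vec\beta}\wlang{e_{\vec\beta}}$, which is precisely the nonemptiness step you make explicit, combined with the fact that the lexicographic order on signatures over the finite index sets $M,N$ (finite by \cref{prop:fl-props}) is a well-order. Your write-up simply supplies the details the paper leaves to the reader, including the correct observation that finiteness of the index set is what makes the lexicographic product well-founded.
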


In fact, for RLL expressions interpreted in $\Lang$, it suffices to take only signatures of finite ordinals, i.e.\ natural numbers, for the result above, but we shall not use this fact.
We are now ready to prove our characterisation of evaluation:

\begin{proof}
[Proof sketch of \cref{lem:eval}]
Let $M,N$ be the sets of $\mu,\nu$-formulas, respectively, in $\fl e$. 

    $\implies$. 
    Suppose $w \in \wlang e $. 
    We construct a winning $\Eloise$ strategy $\strat e $ from $(w,e)$ by always preserving membership of the word in the language of the expression.
    Moreover, at each position $(w',e_0 + e_1)$, $\strat e $ chooses a summand $e_i$ admitting the least $M$-signature $\vec \alpha$ for which $w'\in \wlang {e_i^{\vec \alpha}}$.
    As $\strat e$ preserves word membership, no play reaches a state $(aw,be)$, with $a\neq b$, or $(w,0)$, and so any maximal finite play of $\strat e$ is won by $\Eloise$.
     So let $(w_i,e_i)_{i<\omega}$ be an infinite play of $\strat e$ and, for contradiction, assume that its smallest infinitely occurring formula is $\mu X f(X)$. 
    Write $\vec \alpha_i$ for the least $M$-signature s.t.\ $w_i\in \wlang {e_i^{\vec \alpha_i}}$, for all $i<\omega$.
    By construction $(\vec \alpha_i)_{i<\omega}$ is a monotone non-increasing sequence.
    Moreover, since $(e_i)_{i<\omega}$ is infinitely often $\mu Xf(X)$,
    the sequence $(\vec \alpha_i)_{i<\omega}$ does not converge. Contradiction.\todo{ more justification could be given here.}

    $\impliedby$. The argument is entirely dual, constructing an $\Abelard$-strategy $\strat a $ that preserves non-membership, following least $N$-signatures at positions $(w',e_0 \cap e_1)$. 
\end{proof}


 \section{A cyclic system for RLL expressions}
\label{sec:crll}

In this section we shall present a sequent style system and associated notions of `non-wellfounded' and `cyclic' proof, before proving soundness of cyclic proofs for $\Lang$.

\subsection{Some properties of the language model}
\label{sec:props-of-lang}
Let us take a moment to remark upon some principles valid in the intended interpretation $\Lang$ of RLL expressions, in order to motivate the proof system we are about to introduce. 
In what follows we construe $\Lang $ as a bona fide structure (in the model theoretic sense) with domain $\Pow(\Alphabet^\omega)$.
As usual we write $e\leq f \df e+f =f$, equivalently $e = e\cap f$ (so in $\Lang$, $\leq$ just means $\subseteq)$.

\begin{itemize}
    \item $(0,\top, +,\cap)$ forms a bounded distributive lattice:\footnote{Some of these axioms are redundant, but we include them all to facilitate the exposition.} 
    \begin{equation}
        \label{eq:dist-lattice}
        \begin{array}{r@{\ = \ }l}
        e + 0 & e \\
        e + (f + g) & (e + f ) + g\\
        e + f & f + e \\
        e+e & e \\
        e + (e\cap f) & e\\
        e + (f\cap g) & (e+ f)\cap(e+ g)
    \end{array}
    \qquad
     \begin{array}{r@{\ = \ }l}
         e \cap \top & e\\
         e \cap (f \cap g) & (e \cap f ) \cap g\\
         e \cap f & f \cap e\\
         e \cap e & e \\
         e \cap (e + f) & e \\
         e \cap (f+ g) & (e\cap f)+(e\cap g)
    \end{array}
    \end{equation}
\end{itemize}


\begin{itemize}
    \item Each $a\in \Alphabet$ is a (lower) semibounded lattice homomorphism:
    \begin{equation}
        \label{eq:letter-lbdd-lattice-homo}
         \begin{array}{r@{\ = \ }l}
     a0 & 0 \\
     a(e+f) & ae + af\\
    a(e\cap f) & ae \cap af
    \end{array}
    \end{equation}
\end{itemize}

In particular, of course $\Lang \not \models a\top = \top$, so in this sense $0$ and $\top$ do not behave dually in $\Lang$. 
Instead we have a variant of this principle, indicating that the homomorphisms freely factor the structure:

\begin{itemize}
    \item The ranges of $a\in\Alphabet$ partition the domain:
    \begin{equation}
    \label{eq:letters-freely-cogenerate-structure}
        \begin{array}{r@{\ = \ }ll}
         ae \cap bf & 0 & \text{whenver $a\neq b$} \\
         \top & \sum\limits_{a\in \Alphabet} a\top
    \end{array}
    \end{equation}
\end{itemize}

\todo{commented $\mu$,$\nu$ being least/greatest pre/post fixed points.}


\begin{figure}
    \textbf{$\Alphabet$ rules:}
    \[
    \vlinf{\dis}{a\neq b}{ae,bf \seqar }{}
    \qquad
    \vlinf{\kk a}{\Gamma \neq \emptyset}{a\Gamma \seqar a\Delta}{\Gamma\seqar \Delta}
    \qquad
    \vlinf{\exh}{}{\seqar \{a\Gamma_a\}_{a\in \Alphabet}}{\{\seqar \Gamma_a\}_{a\in \Alphabet}}
    \]
    
    \medskip
    \textbf{Structural rules:}
    \[
    \vlinf{\lr\wk}{}{\Gamma,e\seqar \Delta}{\Gamma \seqar \Delta}
    \qquad
    \vlinf{\rr\wk}{}{\Gamma\seqar \Delta,e}{\Gamma \seqar \Delta}
    \]

    \medskip
    \textbf{Left logical rules:}
    \[
    \begin{array}{ccc}
         \vlinf{\lr 0}{}{\Gamma,0 \seqar \Delta}{}
         &
         \vliinf{\lr +}{}{\Gamma,e + f \seqar \Delta}{\Gamma,e\seqar \Delta}{f \seqar \Delta}
         &
         \vlinf{\lr \mu }{}{\Gamma,\mu X e(X) \seqar \Delta}{\Gamma,e(\mu X e(X)) \seqar \Delta}
         \\
         \noalign{\medskip}
          \vlinf{\lr \top}{}{\Gamma, \top\seqar \Delta}{\Gamma \seqar \Delta}
          &
          \vlinf{\lr \cap}{}{\Gamma,e \cap f \seqar \Delta}{\Gamma,e,f\seqar \Delta}
          &
          \vlinf{\lr \nu }{}{\Gamma,\nu X e(X) \seqar \Delta}{\Gamma,e(\nu X e(X)) \seqar \Delta}
    \end{array}
    \]
    
    \medskip
    \textbf{Right logical rules:}
    \[
    \begin{array}{ccc}
    \vlinf{\rr 0}{}{\Gamma \seqar \Delta, 0}{\Gamma \seqar \Delta}
    &
    \vlinf{\rr +}{}{\Gamma\seqar \Delta, e+f}{\Gamma\seqar \Delta, e,f}
    &
    \vlinf{\rr \mu}{}{\Gamma\seqar \Delta, \mu X e(X)}{\Gamma\seqar \Delta, e (\mu X e(X))}
    \\
    \noalign{\medskip}
    \vlinf{\rr\top}{}{\Gamma\seqar \Delta,\top}{}
    &
     \vliinf{\rr\cap}{}{\Gamma\seqar\Delta,e\cap f}{\Gamma\seqar\Delta,e}{\Gamma\seqar\Delta,f}
     &
     \vlinf{\rr \nu}{}{\Gamma\seqar \Delta, \nu X e(X)}{\Gamma\seqar \Delta, e (\nu X e(X))}
    \end{array}
    \]
    \caption{Rules of the system $\LRLLLanghat$.}
    \label{fig:lrllhat}
\end{figure}

We shall now use these principles to design a proof system for comparing RLL expressions over $\Lang$.

\subsection{A sequent system}
A \defname{sequent} is an expression $\Gamma\seqar \Delta$, where $\Gamma$ and $\Delta$ are sets of expressions (called \defname{cedents}). The intended reading of the LHSs and RHSs of sequents are the meets and sums of their elements, respectively, i.e.\ a sequent $\Gamma \seqar \Delta$ is associated with the inequality $\bigcap \Gamma \leq \sum \Delta$.

\begin{definition}
   The system $\LRLLLanghat$ is given by the rules of \cref{fig:lrllhat}, where we write $a\Gamma \df \{ae : e\in \Gamma\}$.
As usual for structural and logical steps \defname{principal} formula is the distinguished formula on the LHS or RHS, respectively, of the lower sequent, as typeset in \cref{fig:lrllhat}.
Any distinguished formulas on the LHS or RHS, respectively, of the upper sequent are called \defname{auxiliary}.
\end{definition}

Let us take a moment to justify some of the rules of $\LRLLLanghat$:
\begin{itemize}
    \item The $\Alphabet$-rules are justified by the homomorphism principles in $\Lang$, \cref{eq:letter-lbdd-lattice-homo,eq:letters-freely-cogenerate-structure}.
    In particular $\dis$ and $\exh$ are justified by \eqref{eq:letters-freely-cogenerate-structure}, while the $\K$ rules are justified by \eqref{eq:letter-lbdd-lattice-homo}.
 Note the side condition on $\kk a $ that the LHS is nonempty: this corresponds to $a$ being a lower-bounded lattice homomorphism, but not one that preserves $\top$ (indeed by $\dis$ when $|\Alphabet|\geq 1$).
 \item 
The left and right logical rules are justified by the bounded distributive lattice principles in $\Lang$ cf.~\cref{eq:dist-lattice}. In particular distributivity is required as the LHS and RHS contexts may be nonempty.
\item Finally the $\mu$ and $\nu $ rules are justified by the fact that $\mu$ and $\nu$ compute fixed points in $\Lang$, cf.~\cref{dfn:lang-semantics} and the discussion thereafter.
\end{itemize}

Putting these together we obtain:
\begin{proposition}
    [Local soundness]
    \label{prop:local-soundness}
    Each rule of $\LRLLLanghat$ is sound for $\Lang$. I.e.\ for each inference step 
    $$\vliiinf{\mathsf r }{}{\Gamma \seqar \Delta}{\Gamma_1\seqar \Delta_1}{\cdots}{\Gamma_{n} \seqar \Delta_{n}}$$ 
    of $\LRLLLanghat$, if $\lang {\bigcap \Gamma_i} \subseteq \lang {\sum \Delta_i}$ for $i=1,\dots,n$ then $ \lang{\bigcap \Gamma} \subseteq \lang{\sum \Delta}$.
\end{proposition}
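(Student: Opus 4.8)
The plan is to proceed by a case analysis over the rules of $\LRLLLanghat$, reading each sequent $\Gamma \seqar \Delta$ through its intended meaning $\lang{\bigcap\Gamma} \subseteq \lang{\sum\Delta}$ and checking that the algebraic principles already shown valid in $\Lang$ — \cref{eq:dist-lattice,eq:letter-lbdd-lattice-homo,eq:letters-freely-cogenerate-structure}, together with the fixed-point equations $\lang{\sigma X e(X)} = \lang{e(\sigma X e(X))}$ for $\sigma\in\{\mu,\nu\}$ — suffice to derive the conclusion's inclusion from those of the premises. Throughout I read an empty left cedent as $\top$ (i.e.\ $\Alphabet^\omega$) and an empty right cedent as $0$ (i.e.\ $\emptyset$); keeping this convention explicit is what makes the degenerate instances unambiguous, such as the empty RHS of $\dis$ and the empty LHS shared by $\exh$ and the premises of $\kk a$.

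Most cases I expect to be immediate because premise and conclusion receive the \emph{same} interpretation. For the unary logical rules this is just the relevant clause of \cref{dfn:lang-semantics}: $\lr\cap$ and $\rr+$ only rewrite $\lang{e\cap f}=\lang e\cap\lang f$ and $\lang{e+f}=\lang e\cup\lang f$, while $\lr\top$ and $\rr 0$ absorb a $\top$ on the left or a $0$ on the right. For the fixed-point rules $\lr\mu,\rr\mu,\lr\nu,\rr\nu$ the interpretations of premise and conclusion are literally equal, since $\mu$ and $\nu$ denote genuine fixed points in $\Lang$. The nullary rules $\lr 0$ and $\rr\top$ hold outright (the LHS interpretation is $\emptyset$, resp.\ the RHS is $\Alphabet^\omega$), the weakenings $\lr\wk,\rr\wk$ follow by monotonicity of $\cap$ and $\cup$, and $\dis$ is exactly the disjointness $\lang{ae}\cap\lang{bf}=\emptyset$ for $a\neq b$ from \cref{eq:letters-freely-cogenerate-structure}.

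The content-bearing cases are the branching rules and the letter rules. For $\rr\cap$ the two premise inclusions must be combined into one, and this is precisely where distributivity from \cref{eq:dist-lattice} is used: from $\bigcap\Gamma\subseteq\sum\Delta\cup\lang e$ and $\bigcap\Gamma\subseteq\sum\Delta\cup\lang f$ one gets $\bigcap\Gamma\subseteq(\sum\Delta\cup\lang e)\cap(\sum\Delta\cup\lang f)=\sum\Delta\cup(\lang e\cap\lang f)$. The case $\lr+$ is analogous via distributivity and monotonicity, and in fact its right premise $f\seqar\Delta$ carries no context, so it is stronger than strictly needed and soundness holds a fortiori. For $\kk a$ I would push the letter through both cedents using \cref{eq:letter-lbdd-lattice-homo}, writing $\lang{\bigcap a\Gamma}=a\,\lang{\bigcap\Gamma}$ and $\lang{\sum a\Delta}=a\,\lang{\sum\Delta}$, then apply monotonicity of $a\cdot$; and $\exh$ reduces to the covering identity $\top=\sum_{a}a\top$ of \cref{eq:letters-freely-cogenerate-structure}, since each premise forces $\sum\Gamma_a=\Alphabet^\omega$.

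I expect the only genuine subtlety — and hence the main obstacle — to lie in $\kk a$, specifically in its side condition $\Gamma\neq\emptyset$. The rewriting $\lang{\bigcap a\Gamma}=a\,\lang{\bigcap\Gamma}$ requires $a\cdot$ to commute with the meet $\bigcap$, which holds for nonempty meets by \cref{eq:letter-lbdd-lattice-homo} but fails for the empty meet, as $a\top\neq\top$ in $\Lang$. Indeed, were $\Gamma$ empty the premise $\seqar\Delta$ could give $\sum\Delta=\top$ while the conclusion $\seqar a\Delta$ would demand $\top\subseteq a\top$, which is false; so the nonemptiness hypothesis is exactly what licenses the step. By contrast no restriction on $\Delta$ is needed, since $a0=0$ keeps the right-hand rewriting valid even for empty $\Delta$. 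Once this bookkeeping around empty cedents and the non-preservation of $\top$ is pinned down, every case closes by the cited algebraic facts together with monotonicity.
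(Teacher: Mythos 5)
Your proposal is correct and follows essentially the same route as the paper: the paper's justification of \cref{prop:local-soundness} is precisely the preceding rule-by-rule discussion, invoking \cref{eq:dist-lattice} for the logical rules (with distributivity needed exactly because contexts may be nonempty), \cref{eq:letter-lbdd-lattice-homo,eq:letters-freely-cogenerate-structure} for the $\Alphabet$-rules (including the role of the $\Gamma\neq\emptyset$ side condition on $\kk a$, since $a\cdot$ does not preserve $\top$), and the fixed-point semantics for the $\mu,\nu$ rules. Your write-up is somewhat more explicit about empty-cedent conventions and the $\lr+$/$\rr\cap$ manipulations, but the decomposition and the algebraic facts used are the same.
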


However notice that we have not included any induction or coinduction rules in $\LRLLLanghat$: in fact this system does not distinguish $\mu X e $ and $\nu X e$ at the level of rules.\footnote{The `hat' notation is common in the literature on theories of inductive definitions for systems of fixed points that are not necessarily extremal. See, e.g., \cite{iid-book}.}
Rather their distinction is controlled by a notion of infinite proof that we now turn to.

\subsection{Non-wellfounded and cyclic proofs}

We introduce a notion of non-wellfounded proof that allows us to recover (co)inductive reasoning:
\begin{definition}
[Preproofs]
A \defname{preproof} (of $\LRLLLanghat$) is generated \emph{coinductively} from the rules of $\LRLLLanghat$. 
I.e.\ it is a possibly infinite tree of sequents generated by the rules of $\LRLLLanghat$.
A preproof is \defname{cyclic} or \defname{regular} if it has only finitely many distinct sub-preproofs.
\end{definition}

While usual inductively generated proofs can be represented as finite trees or dags of sequents, cyclic preproofs are rather represented as \emph{graphs} of sequents, possibly with cycles.
Let us see some basic examples now, before turning to more interesting ones later.

\begin{example}
    [Empty and universal, revisited]
    \label{ex:0-top-entailments}
    Recall the expressions $\mu X X $ and $\nu X X $ from \cref{ex:empty-and-universal}, respectively denoting the empty and universal languages in $\Lang$.
    Here are some preproofs involving them,
    \[
    \vlderivation{
    \vlin{\infrule}{\bullet}{\mu XX \seqar \nu XX}{
    \vlin{\infrule}{\bullet}{\mu XX \seqar \nu XX}{\vlhy{\vdots}}
    }
    }
    \qquad
    \vlderivation{
    \vlin{\infrule'}{\bullet}{\nu XX \seqar \mu XX}{
    \vlin{\infrule'}{\bullet}{\nu XX \seqar \mu XX}{\vlhy{\vdots}}
    }
    }
    \qquad
    \vlderivation{
    \vlin{\pi_0}{}{\mu XX \seqar \nu XX}{
    \vlin{\pi_1}{}{\mu XX \seqar \nu XX}{\vlhy{\vdots}}
    }
    }
    \]
    where $\infrule $ is either $\lr \mu$ or $\rr \nu$, and $\infrule'$ is either $\lr \nu$ or $\rr \mu$, and $\pi_i$ is $\lr \mu$ if the $i$\textsuperscript{th} bit of the binary expansion of $\pi$ is 0, otherwise $\pi_i$ is $\rr\nu$.
    Here we have used $\bullet$ to indicate roots of identical subpreproofs, and we shall use similar notation for regular preproofs throughout.
    The first two preproofs above are (always) regular, in particular with each subproof being identical.
    The third preproof is not regular, as $\pi$ is irrational.
\end{example}

Notice that $\LRLLLanghat$ does not include a native identity rule $\vlinf{\id}{}{e \seqar e}{}$. For closed expressions $e$, such identities may be derived by (infinite) preproofs in $\LRLLLanghat$.
To see this we first establish a stronger property: 
\begin{proposition}
[Functors]
\label{functors}
    For each expression $e(\vec X)$ (all free variables indicated), there are regular preproofs of $e(\vec f) \seqar e(\vec g)$ using additional initial sequents of form $f_i \seqar g_i$.
\end{proposition}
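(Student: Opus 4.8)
The plan is to prove the statement by induction on the structure of the expression $e(\vec X)$, building a regular preproof of $e(\vec f) \seqar e(\vec g)$ that is, roughly, a `copy' of the syntax tree of $e$ with the leaf axioms $f_i \seqar g_i$ hung at the variable positions. The key observation is that every connective of RLL-expressions has matching left and right rules in $\LRLLLanghat$ which together implement the expected congruence/monotonicity behaviour, so at each connective we peel off the outermost symbol on both sides of the sequent simultaneously and recurse into the immediate subexpressions via the inductive hypothesis.

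Concretely, I would proceed case-by-case on the top-level form of $e$. For a variable $e = X_i$, the desired sequent is exactly the permitted initial sequent $f_i \seqar g_i$, so the preproof is a single leaf. For $e = 0$ use $\lr 0$ (the left sequent $0 \seqar 0$ closes immediately), and dually $\top$ closes via $\rr\top$. For $e = e_0 + e_1$, I would apply $\rr{+}$ on the right to reach $\seqar e_0(\vec g), e_1(\vec g)$, then $\lr{+}$ on the left to split into the two premises $e_0(\vec f) \seqar e_0(\vec g), e_1(\vec g)$ and $e_1(\vec f) \seqar e_0(\vec g), e_1(\vec g)$, and close each by weakening ($\rr\wk$) down to $e_i(\vec f)\seqar e_i(\vec g)$ and appealing to the IH. The case $e = e_0 \cap e_1$ is exactly dual, using $\lr\cap$, $\rr\cap$ and $\lr\wk$. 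For $e = ae'$, a single application of $\kk a$ reduces $ae'(\vec f) \seqar ae'(\vec g)$ to $e'(\vec f)\seqar e'(\vec g)$ (the side condition $\Gamma\neq\emptyset$ is met since the LHS is the singleton $\{e'(\vec f)\}$), and we invoke the IH.

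The interesting cases are the fixed points $e = \sigma X e'(X,\vec X)$ for $\sigma\in\{\mu,\nu\}$, and here is where regularity (rather than mere wellfoundedness) becomes essential. I would unfold both sides once using $\lr\mu / \rr\mu$ (resp.\ $\lr\nu / \rr\nu$), reducing the goal $\sigma X e'(X,\vec f) \seqar \sigma X e'(X,\vec g)$ to $e'(\sigma X e'(X,\vec f),\vec f) \seqar e'(\sigma X e'(X,\vec g),\vec g)$. Now I apply the inductive hypothesis to the \emph{body} $e'(Y, \vec X)$, viewed as an expression in the fresh variable $Y$ together with $\vec X$, instantiating $Y$ with the two whole fixed-point expressions; this yields a preproof whose permitted leaves are of two kinds: copies of $f_i\seqar g_i$ (which we keep as allowed initial sequents), and copies of $\sigma X e'(X,\vec f)\seqar \sigma X e'(X,\vec g)$ arising from the $Y$-position. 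The latter are precisely the sequent we started with, so we close each such leaf by a back-edge to the root, producing a finite cyclic graph. Since the body-preproof is itself regular (by IH) and we add only finitely many back-edges, the result is regular.

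The main obstacle I anticipate is the bookkeeping needed to make the fixed-point cases actually regular and to handle substitution cleanly: I must phrase the inductive hypothesis for open expressions over an extended variable context so that the occurrences of the bound variable $Y$ in the body are tracked as a distinguished kind of leaf, and I must verify that the subexpression $e'$ is genuinely smaller than $\sigma X e'$ in whatever induction measure is used (structural size of the expression works, since $e'$ is a proper subexpression). A subtle point is that a single variable may occur multiple times, so the IH must permit multiple copies of each $f_i\seqar g_i$ leaf — this is unproblematic since these are simply declared initial sequents rather than things we must prove. I would also note explicitly that these preproofs are not claimed to be proofs in the progressing/validity sense (that is deferred), only that they are regular derivations from the stated initial sequents; correctness of the back-edges as genuine (co)inductive reasoning is exactly what the later validity condition will certify.
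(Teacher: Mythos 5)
Your proposal is correct and matches the paper's own proof: the same structural induction, with the routine handling of the lattice and letter cases, and the same key move in the fixed-point case of unfolding via $\lr\mu$/$\rr\mu$ (resp.\ $\lr\nu$/$\rr\nu$), applying the inductive hypothesis to the body over an extended variable context, and closing the resulting leaves $\sigma X e'(X,\vec f)\seqar \sigma X e'(X,\vec g)$ by back-edges to the root to obtain regularity. Your closing remarks (multiplicity of leaves, and that progressiveness is deferred) are also consistent with how the paper treats these points.
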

\begin{proof}
[Proof sketch]
    We proceed by induction on the structure of $e(\vec X)$. 
    All cases are routine except for the fixed point cases.
 If $e(\vec X) = \mu X e' (X,\vec X)$ we construct,
\[
\toks0={.4}
\vlderivation{
\vlin{\lr \mu, \rr \mu}{\bullet}{\mu X e'(X,\vec f) \seqar \mu X e'(X,\vec g)}{
\vltrf{\IH}{e'(\mu Xe'(X,\vec f), \vec f) \seqar e'(\mu Xe'(X, \vec g), \vec g)}{
    \vlin{}{\bullet}{\mu X e'(X,\vec f) \seqar \mu X e'(X,\vec g)}{\vlhy{\vdots}}
}{\vlhy{}}{\vlhy{\{f_i \seqar g_i\}_i}}{\the\toks0}
}
}
\]
where $\IH$ is obtained by the inductive hypothesis. 
The case for greatest fixed points is similar.
\end{proof}

By setting $\vec X$ to be empty in the result above, we duly obtain:
\begin{corollary}
[Identity]
\label{identity}
There are regular preproofs of $e\seqar e$, for closed expressions $e$.    
\end{corollary}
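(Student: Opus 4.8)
The plan is to obtain this as an immediate specialisation of \cref{functors}. Since $e$ is closed, its tuple of free variables $\vec X$ is empty, so the simultaneous substitutions $e(\vec f)$ and $e(\vec g)$ are both vacuous and equal to $e$ itself. First I would instantiate \cref{functors} at $\vec X = \emptyset$, which directly yields a regular preproof whose endsequent is $e \seqar e$. There are no calculations to grind through here: all the work has been packaged into \cref{functors}, and the task reduces to reading off what its statement gives in the degenerate case.

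The one point that genuinely needs checking is that the resulting object is a bona fide preproof of $\LRLLLanghat$, \emph{i.e.}\ that it has no open assumptions. The additional initial sequents permitted by \cref{functors} are exactly those of the form $f_i \seqar g_i$, one for each component $X_i$ of $\vec X$. As $\vec X$ is empty, this index set is empty, so no such leaves occur: every leaf of the construction is either part of an infinite branch (accounted for coinductively, in the sense of the \textbf{Preproofs} definition) or an instance of a nullary rule such as $\lr 0$ or $\rr\top$. Regularity is inherited verbatim from \cref{functors}, since restricting a regular preproof to a special case of its substitutions cannot increase the number of distinct sub-preproofs.

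I do not expect any real obstacle at the level of the corollary; the substantive content lies entirely upstream, in the fixed-point cases of \cref{functors}, where the back-edges at $\lr\mu,\rr\mu$ (and dually $\lr\nu,\rr\nu$) produce the cyclic structure that makes the identity derivable without a native $\id$ rule. If one wanted a fully self-contained argument, the only thing to spell out is the bookkeeping that confirms the empty-$\vec X$ instance leaves the back-edge structure intact while deleting precisely the $f_i\seqar g_i$ leaves; this is routine given the explicit shape of the fixed-point construction in \cref{functors}.
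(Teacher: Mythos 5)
Your proposal is correct and is exactly the paper's own argument: the corollary is obtained by instantiating \cref{functors} with $\vec X$ empty, so that no initial sequents $f_i \seqar g_i$ remain and the regular preproof of $e(\vec f)\seqar e(\vec g)$ degenerates to one of $e\seqar e$. The extra bookkeeping you spell out (absence of open leaves, inheritance of regularity) is implicit in the paper but entirely consistent with it.
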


Of course non-wellfounded reasoning can be fallacious, so genuine `proofs' must satisfy further conditions.
This is given by a `global trace condition' as per usual in cyclic proof theory:

\begin{definition}[Ancestry and traces]
    For an inference step $\infrule$, a formula $e'$ in the LHS/RHS of a premiss is an \defname{immediate ancestor} of a formula $e$ in the LHS/RHS, respectively, of the conclusion if:
\begin{itemize}
    \item $\infrule$ is structural or logical, $e$ is principal and $e'$ is auxiliary; or,
    \item $\infrule$ is a $\kk a$ step and $e = ae'$; or,
    \item $\infrule $ is a $\exh$ step, $e'$ occurs in the $a$-premiss and $e = ae'$; or,
    \item $e = e'$.
\end{itemize}

    For a branch $B$ (along some preproof), a \defname{$B$-trace} (or just `trace' when unambiguous) is a maximal path along the graph of immediate ancestry, restricted to $B$.
    An LHS or RHS trace is one that remains in the LHS or RHS respectively.
    We say that an infinite trace $\tau$ is \defname{progressing} if it is infinitely often principal and:
    \begin{itemize}
        \item $\tau$ is LHS with smallest infinitely often principal formula a $\mu$ formula; or,
        \item $\tau $ is RHS with smallest infinitely often principal formula a $\nu$ formula.
    \end{itemize}
\end{definition}

Note that, by construction, any $B$-trace as defined above is indeed a trace in the sense of \cref{def:fl}.
Thus, by \cref{prop:fl-props}, the notion of progress above is well-defined.\footnote{Note that the corner case when a principal formula has an identical auxiliary formula, like in \cref{ex:0-top-entailments}, is possible only when the formula in question is already a $\mu$ or $\nu$ formula.}
Let us also point out that:
\begin{observation}
\label{obs:guarded-preproofs-no-stable-traces}
If $\Gamma , \Delta$ consists of only guarded formulas and $P$ is a $\LRLLLanghat$ preproof of $\Gamma\seqar \Delta$, then no trace in $P$ is eventually stable.
\end{observation}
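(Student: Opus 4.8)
The plan is to argue by contradiction: assume some trace $\tau$ along a branch $B$ of $P$ is eventually stable, say equal to a fixed formula $f$ from some height $N$ onwards, and derive a contradiction from guardedness. The whole argument reduces to two facts about guarded preproofs, namely that \textbf{(i)} every formula occurring in $P$ is itself guarded, and \textbf{(ii)} every infinite branch of $P$ applies infinitely many of the letter rules $\kk a$ and $\exh$. Granting these, the contradiction is immediate: since $\tau$ stabilises at $f$ from $N$, the formula $f$ occurs in every sequent of $B$ above $N$; by \textbf{(ii)} some letter rule is applied above $N$, and such a rule forces every cedent formula (in particular $f$) to have the form $a(\cdot)$, so by inspection of the immediate-ancestor definition the ancestor of $f$ is obtained by stripping the leading letter, hence differs from $f$ (or, at an $\exh$ step whose branch enters a premiss for a different letter, $\tau$ simply has no continuation and terminates). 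Either way $\tau$ cannot remain equal to $f$ past $N$. This handles uniformly both the ``perpetual side formula'' case and the corner case of a self-ancestral principal $\mu/\nu$ formula.

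For fact \textbf{(i)} I would first record the standard closure property that every formula occurring in $P$ satisfies $e\leqfl g$ for some endsequent formula $g$: reading each rule upwards produces only $\flred$-reducts or copied context formulas, and $\fl{g}$ is finite by \cref{prop:fl-props}. It then suffices to show that $\flred$ preserves guardedness of closed expressions. The $+,\cap,\mu,\nu$ reductions are routine, since unfolding $\sigma X e(X)\flred e(\sigma X e(X))$ substitutes the closed guarded expression $\sigma X e(X)$ only at occurrences of $X$, all of which sit under a letter by guardedness. The delicate case is $ae\flred e$, where stripping $a$ could a priori expose a variable guarded only by the outer $a$. This cannot happen precisely because FL-reducts of closed expressions are closed: any occurrence of a bound variable $Z$ in $e$ is bound by some $\sigma Z$ lying inside $e$ (otherwise $Z$ would be free in the closed expression $e$), so the subexpression of the form $a'f'$ witnessing its guardedness in $ae$ must lie between that occurrence and its binder, hence inside $e$. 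Thus every variable occurrence of $e$ is already guarded within $e$.

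For fact \textbf{(ii)} I would show that any maximal segment of a branch using no letter rule is finite, by exhibiting a strictly decreasing measure. Define a letter-bounded size $|e|$ by the usual size recursion but stopping at letters, i.e.\ $|X|=|0|=|\top|\df 1$, $|ae|\df 1$, $|e\star f|\df 1+|e|+|f|$ for $\star\in\{+,\cap\}$, and $|\sigma X e|\df 1+|e|$. Using \textbf{(i)}, every unfolded fixed point $\sigma X e(X)$ is guarded, so all occurrences of $X$ in $e$ lie below a letter; hence $|e(\sigma X e(X))|=|e(X)|$ (the substitution changes the expression only below letters, where $|\cdot|$ does not recurse), and the unfolding strictly decreases $|\cdot|$ from $1+|e(X)|$ to $|e(X)|$. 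Every other non-axiom logical rule, as well as each weakening, also strictly decreases $\sum_{e}|e|$ summed over both cedents, and letter rules are excluded from the segment; so the segment is well-founded, hence finite. Consequently an infinite branch cannot consist of finitely many letter rules followed by an infinite letter-free segment, giving \textbf{(ii)}.

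The main obstacle is the guardedness-preservation step in fact \textbf{(i)}, specifically the letter-stripping case: the correct observation is that it is \emph{closedness} of FL-reducts, not guardedness by itself, that prevents unguarding when an outer letter is removed. Once this is in place, the measure in fact \textbf{(ii)} is precisely designed so that guardedness turns each fixed-point unfolding from a size-neutral step (as it would be for the unguarded $\mu X X$, where $|e(X)|=|X|=1$ and unfolding is stationary) into a strictly decreasing one, which is exactly where the hypothesis of guardedness is consumed.
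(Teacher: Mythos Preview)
The paper records this as an observation without proof, so you are supplying the argument it omits. Your decomposition into facts \textbf{(i)} (guardedness is preserved along $\flred$) and \textbf{(ii)} (guarded branches must apply letter rules infinitely often), together with the letter-bounded size measure for \textbf{(ii)}, is the natural and correct way to make the observation precise; in particular your isolation of \emph{closedness} as the reason the letter-stripping case of \textbf{(i)} goes through is exactly the right point.

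One small caveat is worth flagging. At a $\kk a$ step you assert that the only ancestor of $f$ is obtained by stripping the leading letter; but the paper's final $e=e'$ clause in the immediate-ancestor definition is stated \emph{unconditionally}, so if $f=ag$ happens also to lie in the premiss $\Gamma$ (i.e.\ both $g,ag\in\Gamma$, whence $ag,aag\in a\Gamma$) then $f$ is literally its own ancestor as well. Under that literal reading one can even manufacture guarded preproofs with stable traces: for $g=\mu X(aaX)$, track $ag$ along a branch cycling through the left-hand sides $\{g,ag,aag\}$, $\{ag,aag\}$, $\{g,ag\}$. Your argument, and the observation itself, are correct under the evidently intended convention that the $e=e'$ clause records only untouched context formulas at logical and structural steps, not incidental overlaps at $\kk a$ or $\exh$; this is the standard reading and is how the observation is used downstream, but the paper's ancestry definition is mildly imprecise here, and you might note the convention explicitly.
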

This means that any trace in a preproof of a guarded sequent is either a $\mu$-trace or a $\nu$-trace, cf.~\cref{prop:fl-props}.

\begin{definition}[$\CRLLLang$ system]
 A $\LRLLLanghat$ preproof is \defname{progressing} if each infinite branch has a progressing trace. Write $\CRLLLang$ for the class of regular progressing $\LRLLLanghat$-preproofs, which we may simply call $\CRLLLang$-proofs henceforth.
 We may write, say, $\CRLLLang \proves \Gamma \seqar \Delta$ if there is a $\CRLLLang$-proof of the sequent $ \Gamma \seqar \Delta$.
\end{definition}

\begin{example}
    [Some (non-)proofs]
    Looking again at the preproofs from \cref{ex:0-top-entailments}, we have:
    \begin{itemize}
        \item The first preproof is progressing regardless of whether $\infrule$ is $\lr \mu$ or $\rr\nu$.
        In the former case the only infinite branch has a progressing LHS trace, whereas in the latter case it has a progressing RHS trace.
        \item The second preproof is not progressing, regardless of whether $\infrule'$ is $\lr\nu$ or $\rr\mu$.
        In the former case the only infinitely often principal trace along the only infinite branch is a LHS $\nu$-trace, and in the latter case it is a RHS $\mu$-trace.
        \item The third preproof is indeed progressing (despite not being regular). The binary expansion of $\pi$ must have infinitely many $0$s and infinitely many $1$s, as it is irrational, and so the only infinite branch has a LHS $\mu$-trace and a RHS $\nu$-trace, both of which are progressing.
    \end{itemize}
\end{example}

\begin{remark}
    [Proof checking]
    \label{proof-checking}
    While the progressing condition for proofhood may look complex, let us emphasise that it is indeed decidable for regular preproofs. 
    The set of progressing branches of a regular preproof $P$ forms an $\omega$-regular language over the alphabet of (the finitely many) sequents in $P$.
    In particular we can construct a non-deterministic B\"uchi automaton $\mathbf B_P$ that guesses a progressing trace on the fly (along with its critical LHS-$\mu$ or RHS-$\nu$ formula), verifying that no smaller formula is unfolded after some finite prefix (which is also guessed). 
    Now $P$ is progressing just if $\mathbf B_P$ accepts \emph{every} infinite branch of $P$; this reduces simply to checking universality of a non-deterministic B\"uchi automaton, and so thus constitutes a $\mathbf{PSPACE}$ algorithm for proof checking.
The reader may consult, e.g.\ \cite{NiwWal96:games-mu-calc,DekKloMarVen23:prf-sys-mu-calc-via-det}, for similar developments in the setting of the modal $\mu$-calculus.
\end{remark}

\begin{remark}
    [Identity and functors, revisited]
    The functor preproofs constructed in \cref{functors} are indeed progressing. This is readily established by adding as an invariant to the Proposition statement `such that, for every branch from the conclusion to an initial sequent $f_i\seqar g_i$, there is a (finite) LHS/RHS trace from the conclusion's LHS/RHS to $f_i$/$g_i$ along which $f_i$/$g_i$ are always subformulas, respectively'.
    Thus $\CRLLLang$ proves $e\seqar e$ for each closed expression $e$.
\end{remark}








\subsection{Further examples}
We conclude this section with some examples of cyclic proofs more interesting for $\omega$-language theory.
 Recall the expressions $f_a \df \mu X (aX + bX + \nu Y(bY))$ and $i_a \df \nu X \mu Y (aX + bY)$ from \cref{ex:(in)finitely-many}, denoting the languages of finitely many $a$s and infinitely many $a$s respectively, when $\Alphabet = \{a,b\}$.
 Let us point out that both $f_a$ and $i_a$ are guarded.
 In what follows we write $i_a'\df \mu Y (ai_a + bY)$.

 Recall $\lang{\nu X(aX)} = a^\omega$ from \cref{ex:omega-iteration}.
 Our first example is a cyclic proof demonstrating that $a^\omega$ indeed has infinitely many $a$s:
 \[
 \vlderivation{
 \vlin{\lr\nu, \rr\nu}{\bullet}{\nu X (aX) \seqar \tikzmarknode{i1}{i_a}}{
 \vlin{\rr \mu}{}{a \nu X (aX) \seqar \tikzmarknode{i2}{i_a'}}{
 \vlin{\rr+,\rr\wk}{}{a\nu X (aX) \seqar ai_a \tikzmarknode{i3}{+} bi_a'}{
 \vlin{\kk a }{}{a\nu X (aX) \seqar \tikzmarknode{i4}{ai_a}}{
 \vlin{\lr\nu,\rr\nu}{\bullet}{\nu X (aX) \seqar \tikzmarknode{i5}{i_a}}{\vlhy{\vdots}}
 }
 }
 }
 }
 }
 \thread[orange]{(i1) (i2) (i3) (i4) (i5)}
 \]
The only infinite branch, looping on $\bullet$, has a progressing RHS trace with critical formula $i_a$, according to the indicated orange path.
Note in particular that, while $i_a'$ is also infinitely often principal, it contains $i_a$ as a subformula.

On the other hand here is a cyclic proof that $a^\omega$ does not have finitely many $a$s, using $\cap$:
\[
\vlderivation{
\vlin{\lr\cap}{}{f_a\cap \nu Y(aY) \seqar}{
    \vlin{\lr\nu}{\bullet}{\tikzmarknode{f1}f_a,\nu Y(aY) \seqar}{
        \vlin{\lr\mu}{}{\tikzmarknode{f2}f_a,a\nu Y(aY) \seqar}{
            \vliiin{\lr+}{}{af_a+b\tikzmarknode{f3}f_a+\nu Y(bY),a\nu Y(aY) \seqar}{
                \vlin{\kk a}{}{a\tikzmarknode{f4}f_a,a\nu Y(aY) \seqar}{
                    \vlin{}{\bullet}{\tikzmarknode{f5}f_a,\nu Y(aY) \seqar}{\vlhy\vdots}
                }
            }{
                \vlin{\dis}{}{bf_a,a\nu Y(aY) \seqar}{\vlhy{}}
            }{
                \vlin{\lr\nu}{}{\nu YbY,a\nu Y(aY) \seqar}{
                    \vlin{\dis}{}{b\nu YbY,a\nu Y(aY) \seqar}{\vlhy{}}
                }
            }
        }
    }
}
}
\thread[blue]{(f1) (f2) (f3) (f4) (f5)}
\]
Once again there is only one infinite branch, looping on $\bullet$, and this has a progressing LHS trace with critical formula $f_a$, according to the indicated blue path.

The next proof demonstrates that if an infinite word over $\{a,b\}$ has finitely many $a$s then it must have infinitely many $b$s.
\[
\vlderivation{
\vlin{\rr\nu, \rr\mu}{\bullet}{f_a\seqar \tikzmarknode{s1}{i_b}}{
\vlin{\lr \mu}{\circ}{\tikzmarknode{t2}{f_a} \seqar bi_b \tikzmarknode{s2}{+} ai_b'}{
\vliiin{\lr +, \rr+}{}{\tikzmarknode{t3}{af_a} + bf_a + \nu Y(bY) \seqar bi_b \tikzmarknode{s3}{+} ai_b'}{
    \vlin{\rr \wk}{}{\tikzmarknode{t4}{af_a} \seqar bi_b , \tikzmarknode{u4}{ai_b'}}{
    \vlin{\kk a}{}{\tikzmarknode{t5}{af_a} \seqar \tikzmarknode{u5}{a i_b'}}{
    \vlin{\rr \nu,\rr\mu}{}{\tikzmarknode{t6}{f_a} \seqar \tikzmarknode{u6}{i_b'}}{
    \vlin{\lr \mu}{\circ}{\tikzmarknode{t7}{f_a} \seqar \tikzmarknode{u7}{bi_b + ai_b'}}{\vlhy{\vdots}}
    }
    }
    }
}{
    \vlin{\rr\wk}{}{bf_a \seqar \tikzmarknode{s4}{bi_b} , ai_b'}{
    \vlin{\kk b}{}{bf_a \seqar \tikzmarknode{s5}{bi_b}}{
    \vlin{\rr\nu, \rr\mu}{\bullet}{f_a \seqar \tikzmarknode{s6}{i_b}}{\vlhy{\vdots}}
    }
    }
}{
    \vlin{\lr \nu, \rr\wk}{}{\nu Y (bY) \seqar bi_b , ai_b'}{
    \vlin{\kk b}{}{b\nu Y(bY) \seqar bi_b}{
    \vliq{}{}{\nu Y (bY) \seqar i_b}{\vlhy{}}
    }
    }
}
}
}
}
\thread[orange]{(t2) (t3) (t4) (t5) (t6) (t7)}
\thread[blue]{(s1) (s2) (s3) (s4) (s5) (s6)}
\thread[green]{(s2) (s3) (u4) (u5) (u6) (u7)}
\]
The rightmost proof of $\nu Y(bY) \seqar i_b$ is given by the previous example.
Besides the unique infinite branch along that subproof, there are now continuum many infinite branches, indexed by elements of $\{\bullet,\circ\}^\omega$ in the natural way. 
To justify that the cyclic proof above is indeed progressing, let us do a case analysis on infinite branches:
\begin{itemize}
    \item A branch that hits $\bullet $ only finitely often, eventually looping on $\circ$, has a progressing LHS trace with critical formula $f_a$, according to the orange path (eventually).
    \item A branch that hits $\circ$ only finitely often, eventually looping on $\bullet$, has a progressing RHS trace with critical formula $i_b$, according to the blue path (eventually).
    \item Otherwise an infinite branch must repeat both $\bullet$ and $\circ$ infinitely often, in which case it has a progressing RHS trace along $i_b$, according to the green or blue path, depending on the current loop, $\circ$ or $\bullet $ respectively. 
    Note that this trace only `progresses' on the $\bullet $ loop, along which $i_b$ is principal.
\end{itemize}

In a similar vein, here is a cyclic proof showing that any infinite word over $\{a,b\}$ has either finitely many $a$s or infinitely many $a$s:
    \[
    \hspace{-2em}
\vlderivation{
    \vlin{\rr +}{}{ \seqar f_a+i_a}{
        \vlin{\rr\mu,\rr\nu}{\bullet}{\seqar f_a,\tikzmarknode{i1}i_a}{
            \vlin{\rr +}{}{\seqar af_a+bf_a+\nu Y (bY),\tikzmarknode{i2}i_a'}{
                \vlin{\rr\nu,\rr\mu}{\circ}{\seqar af_a,bf_a,\tikzmarknode{b1}\nu Y (bY),\tikzmarknode{i3}i_a'}{
                    \vlin{\rr +}{}{\seqar af_a,bf_a,\tikzmarknode{b2}b\nu Y (bY),ai_a\tikzmarknode{i4}+bi_a'}
                    {\vliin{\exh}{}{\seqar af_a,bf_a,\tikzmarknode{b3}b\nu Y (bY),\tikzmarknode{i5}ai_a,\tikzmarknode{j5}bi_a'}{
                        {
                            {
                                \vlin{}{\bullet}{\seqar f_a,\tikzmarknode{i7}i_a}{\vlhy{ \ \ \ \ \ \vdots\ \ \ \ \  }}
                            }
                        }
                    }{
                        {
                            {
                                \vlin{\rr\mu,\rr +}{}{\seqar f_a,\tikzmarknode{b5}\nu Y (bY),\tikzmarknode{j7}i_a'}{
                                    \vlin{}{\circ}{    \seqar af_a,bf_a,\tikzmarknode{b6}\nu Y (bY),\tikzmarknode{j8}i_a'}{\vlhy{\vdots}}
                                }
                            }
                        }
                    }}
                }
            }
        }
    }
}
\thread[green]{(i1) (i2) (i3) (i4) (i5) (i7)}
\thread[blue]{(i3) (i4) (j5) (j7) (j8)}
\thread[orange]{(b1) (b2) (b3) (b5) (b6)}
\]

Again there are continuum many infinite branches.
Let us again conduct a case analysis on such infinite branches:
\begin{itemize}
    \item A branch that hits $\bullet $ only finitely often, eventually looping on $\circ$, has a progressing trace with critical formula $\nu Y(bY)$, according to the orange path (eventually).
    \item A branch that hits $\circ$ only finitely often, eventually looping on $\bullet$, has a progressing trace with critical formula $i_a$, according to the green path (eventually).
    \item Otherwise an infinite branch must repeat both $\bullet$ and $\circ $ infinitely often, in which case it has a progressing trace with critical formula $i_a$, according to the green and blue paths depending on the current loop, $\bullet $ or $\circ$ respectively.
\end{itemize}

Finally let us consider the sequent $f_a\cap f_b \seqar$ (``it is not possible for an infinite word over $\{a,b\}$ to have finitely many $a$s and finitely many $b$s''). We give its cyclic proof in \cref{fig:fa-cap-fb-empty}, where every infinite branch actually has two progressing LHS traces, one with critical formula $f_a$ and one with critical forula $f_b$.\todo{could draw traces}

\begin{figure}
    \centering
    \scalebox{0.95}{
\rotatebox{90}{
$
\vlderivation{
  \vlin{\lr\cap}{}{f_a\cap f_b}{
    \vlin{\lr\mu}{\bullet}{f_a,f_b}{
      \vlin{\lr\mu}{}{af_a+bf_a+\nu Y(bY),f_b}{
        \vliiin{\lr+}{}{af_a+bf_a+\nu Y(bY),af_b+bf_b+\nu Y(aY)}{
          \vliiin{\lr+}{}{af_a,af_b+bf_b+\nu Y(aY)}{
            \vlin{\kk a}{}{af_a,af_b}{
              \vlin{}{\bullet}{f_a,f_b}{
                \vlhy{\vdots}
              }
            }
          }{
            \vlin{\dis}{}{af_a,bf_b}{
              \vlhy{}
            }
          }{
            \vliq{}{}{af_a,\nu Y(aY)}{
              \vlhy{}
            }
          }
        }{
          \vliiin{\lr+}{}{bf_a,af_b+bf_b+\nu Y(aY)}{
            \vlin{\dis}{}{bf_a,af_b}{
              \vlhy{}
            }
          }{
            \vlin{\kk b}{}{bf_a,bf_b}{
              \vlin{}{\bullet}{f_a,f_b}{
                \vlhy{\vdots}
              }
            }
          }{
            \vlin{\lr\nu}{}{bf_a,\nu Y(aY)}{
              \vlin{\dis}{}{bf_a,a\nu Y(aY)}{
                \vlhy{}
              }
            }
          }
        }{
          \vliiin{\lr+}{}{\nu Y(bY),af_b+bf_b+\nu Y(aY)}{
            \vlin{\lr\nu}{}{\nu Y(bY),af_b}{
              \vlin{\dis}{}{b\nu Y(bY),af_b}{
                \vlhy{}
              }
            }
          }{
            \vliq{}{}{\nu Y(bY),bf_b}{
              \vlhy{}
            }
          }{
            \vlin{\lr\nu}{}{\nu Y(bY),\nu Y(aY)}{
              \vlin{\lr\nu}{}{b\nu Y(bY),\nu Y(aY)}{
                \vlin{\dis}{}{b\nu Y(bY),a\nu Y(aY)}{
                  \vlhy{}
                }
              }
            }
          }
        }
      }
    }
  }
}$}}
    \caption{A $\CRLLLang$ proof that no $\omega$-word over $\{a,b\}$ can have both finitely many $a$s and finitely many $b$s. All formulas occur to the LHS of the sequent arrow $\seqar$, which we have omitted to save space.}
    \label{fig:fa-cap-fb-empty}
\end{figure}

\section{Soundness and completeness}
\label{sec:adequacy}
We are now in a position to state the main result of this work:

\begin{theorem}
[Adequacy]
\label{thm:adequacy-crlll-lang}
$\CRLLLang\proves \Gamma\seqar \Delta \iff  \bigcap\limits_{e \in \Gamma}\lang e \subseteq \bigcup\limits_{f\in \Delta}\lang f$, for $\Gamma,\Delta$ containing only guarded expressions.
\end{theorem}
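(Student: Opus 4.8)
The plan is to prove the two directions separately: \emph{soundness} ($\Rightarrow$) by reducing to the adequacy of the evaluation game (\cref{lem:eval}), and \emph{completeness} ($\Leftarrow$) by analysing an associated \emph{proof search game}, whose finite-memory determinacy is furnished by the Büchi--Landweber theorem \cite{BuchiLandweber}. Throughout I exploit the guardedness hypothesis twice: it rules out eventually-stable traces (\cref{obs:guarded-preproofs-no-stable-traces}), so every infinite trace is a $\mu$- or $\nu$-trace, and it forces infinitely many $\Alphabet$-rules along any infinite branch, so that such a branch reads off a genuine $\omega$-word.

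For soundness I would fix a $\CRLLLang$-proof $P$ of $\Gamma \seqar \Delta$ and, for contradiction, a word $w$ with $w \in \lang e$ for all $e \in \Gamma$ but $w \notin \lang f$ for all $f \in \Delta$. By \cref{lem:eval} and \cref{obs:eval-game-is-pos-det} I fix a \emph{universal} positional winning strategy $\sigma$ for $\Eloise$ and one $\tau$ for $\Abelard$ (each winning from every position in its player's region). Using the contrapositive of local soundness (\cref{prop:local-soundness}), I build an infinite branch $B$ of $P$ all of whose sequents are \emph{falsified} by the suffix of $w$ read so far: at each branching rule some premiss must be false, and I resolve the choice by the strategies, following $\sigma$'s choice of the principal formula at an $\lr +$ step and $\tau$'s at an $\rr\cap$ step (and the appropriate letter-premiss at an $\exh$ step). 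Universality guarantees the chosen premiss remains false, and since axioms ($\lr 0$, $\rr \top$, $\dis$) are valid the branch never terminates. By proofhood $B$ carries a progressing trace, which by \cref{obs:guarded-preproofs-no-stable-traces} is either an LHS $\mu$-trace or an RHS $\nu$-trace. The crux is that such a trace, read bottom-up, \emph{is} an infinite play of the evaluation game on the relevant suffix word, with $\lr +$/$\lr\cap$ (resp.\ $\rr+$/$\rr\cap$) steps being exactly $\Eloise$/$\Abelard$ moves and $\kk a$ steps advancing the letter. In the LHS $\mu$-case this play is consistent with $\sigma$ yet won by $\Abelard$ (its least infinitely-recurring formula is a $\mu$), contradicting that $\sigma$ wins from the true LHS formula where the trace begins; the RHS $\nu$-case is dual against $\tau$.

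For completeness I would set up the proof search game $G$ on the finite arena of sequents over $\fl \Gamma \cup \fl \Delta$ (finite by \cref{prop:fl-props}): at each sequent $\prover$ picks an applicable rule instance and $\denier$ picks a premiss, with $\prover$ winning a play if it reaches an axiom or is infinite with a progressing trace. As observed in \cref{proof-checking}, this winning condition is $\omega$-regular, so by Büchi--Landweber $G$ is determined and the winner has a finite-memory strategy. A finite-memory winning strategy for $\prover$ unfolds into a \emph{regular} preproof all of whose branches are progressing, i.e.\ exactly a $\CRLLLang$-proof of $\Gamma\seqar\Delta$; hence it suffices to show that validity of the inclusion forces $\prover$ to win. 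I argue the contrapositive: playing a $\denier$ winning strategy against a \emph{saturating} $\prover$ (one that eventually decomposes every formula and reads letters) yields a single infinite branch carrying an $\omega$-word $w$ and no progressing trace. Thus every infinite LHS trace is a $\nu$-trace and every infinite RHS trace a $\mu$-trace; reading these as evaluation plays and invoking \cref{lem:eval} shows $w \in \lang e$ for every $e \in \Gamma$ and $w \notin \lang f$ for every $f \in \Delta$, so the inclusion fails.

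I expect the main obstacle to be this completeness direction, and specifically the countermodel extraction in its contrapositive: converting a $\denier$ winning strategy into one falsifying word together with \emph{winning} evaluation strategies for $\Eloise$ on every left formula and $\Abelard$ on every right formula. The difficulty is that a single branch encodes only one play per formula, so $\prover$'s saturating schedule must be designed so that the branch simultaneously realises winning evaluation strategies for all cedent formulas on the \emph{same} word, after which one must verify --- via the trace-to-play correspondence and the $\mu/\nu$ analysis of \cref{prop:fl-props} --- that the absence of a progressing trace is precisely the conjunction of these evaluation-game wins. Care is also needed to guarantee that $\prover$ never deadlocks at a non-axiom (so the rule set is exhaustive up to weakening) and that the finite-memory strategy really yields a regular, not merely well-founded, proof.
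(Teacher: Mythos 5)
Your proposal is correct and follows essentially the same route as the paper: soundness by constructing a branch of the given proof guided by positional evaluation-game strategies and exploiting the trace-to-play correspondence (the paper phrases this as showing the preproof cannot be progressing, rather than deriving a contradiction, but the construction is identical), and completeness via the finite-state proof search game, B\"uchi--Landweber finite-memory determinacy, and extraction of a countermodel word by playing a Denier winning strategy against a saturating (the paper: ``canonical'') Prover strategy. The obstacles you flag --- realising winning evaluation strategies for all cedent formulas simultaneously on one word, totality of the Prover strategy, and regularity of the proof obtained from a finite-memory strategy --- are precisely the points the paper's Definition of the proof search strategy, its validity-preservation proposition, and the countermodel lemma take care of.
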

Both directions are ultimately established by reduction to the adequacy of evaluation games, \cref{lem:eval}, though completeness will require us to build up further game theoretic machinery first.

\begin{remark}
    [Restriction to guarded sequents]
    The restriction to guarded expressions in our main theorem above is harmless in the sense that each expression is equivalent to a guarded one, over $\Lang$, cf.~\cref{prop:omega-reg-have-munu-exps} (see also, e.g., \cite{BruseFL15:guarded-transformations} for the related setting of the modal $\mu$-calculus). 
    For instance $\mu XX$ is equivalent to $\mu X(aX)$ and $\nu XX$ is equivalent to $\nu X \left(\sum\limits_{a \in \Alphabet} aX\right)$.
Dealing with unguarded expressions introduces complications for proof search strategy, in particular requiring tedious loop checks, (see, e.g., \cite{FriLan11:mu-calc-off-guard}, again in the setting of the modal $\mu$-calculus) that arguably detract from the deeper game theoretic machinery underlying completeness proofs in cyclic proof theory.
We should emphasise that we believe that our system $\CRLLLang$ is sound and complete over \emph{arbitrary} expressions, though such a result is beyond the scope of this work.
Let us point out that, in previous work \cite{DD24a,DasDe24:rla-preprint}, guardedness was \emph{necessary} for completeness for a subsystem $\nu\CRLA$, without $\top,\cap$ and with \emph{exactly} one formula on the left.
$\CRLLLang$ does not fall victim to the same counterexamples, due to its symmetric nature.
\end{remark}

\subsection{Soundness}
\label{sec:soundness-cyclic}

In this subsection we prove soundness of $\CRLLLang$ for $\Lang$.
\begin{proof}
[Proof of $\implies$ direction of \cref{thm:adequacy-crlll-lang}]
    Let $P$ be a $\LRLLLanghat$ preproof of $\Gamma\seqar \Delta$ and suppose $w\in \lang e$ for all $e \in \Gamma$ but $w \notin \lang f $ for all $f \in \Delta$.
    We shall show that $P$ is not progressing.

    Let $\mathfrak e $ be a positional $\Eloise$ strategy that is winning from each $(w,e)$ for $e \in \Gamma$ and $\strat a $ a positional $\Abelard$ strategy winning from each $(w,f)$ for $f\in \Delta$.\footnote{For instance, it suffices to take an $\Eloise$ strategy winning from $(w,\bigcap\Gamma)$ and an $\Abelard$ strategy winning from $(w,\sum \Delta)$. 
    Note that there always even exists a \emph{universal} positional winning strategy for each player, one that wins from all winning positions, but we are not using this fact.}
     $\mathfrak e$ and $\mathfrak a $ induce a unique infinite branch $B_{\strat e ,\strat a}$, by always making choices in the branch construction consistent with these two strategies.
    In more detail, we construct $B_{\strat e ,\strat a} = (\Gamma_i \seqar \Delta_i)_{i<\omega}$ and $(w_i)_{i<\omega}$, with each $w_i \in \Alphabet^\omega$, by setting:
    \begin{itemize}
        \item $\Gamma_0 \seqar \Delta_0$ is $\Gamma\seqar \Delta$ and $w_0 $ is $w$.
        \item When $w_i = av$ and $\Gamma_i \seqar \Delta_i$ concludes a $\kk a $ step we set $w_{i+1} \df v$ (and $\Gamma_{i+1} \seqar \Delta_{i+1}$ to be the only premiss).
        \item When $w_i = av$ and $\Gamma_i\seqar \Delta_i$ concludes a $\exh$ step we set $w_{i+1} \df v$ and $\Gamma_{i+1}\seqar \Delta_{i+1}$ to be the $a$-premiss.
        \item When $\Gamma_i \seqar \Delta_i$ concludes a $\lr +$ step with principal formula $g_0+g_1$, if $\strat e$ plays $(w_i,g_0 + g_1) \to (w_i,g_j)$ then $B_{\strat e ,\strat a}$ follows the $g_j$ branch and $w_{i+1} \df w_i$.
        \item When $\Gamma_i \seqar \Delta_i$ concludes a $\rr \cap$ step with principal formula $g_0\cap g_1$, if $\strat a$ plays $(w_i,g_0 \cap g_1) \to (w_i,g_j)$ then $B_{\strat e ,\strat a}$ follows the $g_j$ branch and $w_{i+1} \df w_i$.
        \item In all other cases $B_{\strat e, \strat a}$ follows the only premiss and $w_{i+1}=w_i$. 
        Note in particular that no $\Gamma_i\seqar \Delta_i$ can conclude an initial step by local soundness, \cref{prop:local-soundness}.
    \end{itemize}
        Recall from \cref{obs:guarded-preproofs-no-stable-traces} that, by guardedness, no trace in $P$ is ultimately stable.
    \anupam{actually I don't think we are relying on this! we just need that any infinitely often principle on the LHS/RHS is a play of e/a resp. Thus this proof should hold also for unguarded sequents.}
    \anupam{actually, what about $\mu XX$ and $\nu XX$ that unfold to themselves?}
    By inspection of the rules and definition of $B_{\strat e , \strat a }$ we thus have:
    \begin{itemize}
        \item Every LHS trace along $B_{\strat e,\strat a}$, restricted to principal formulas, forms the right components of a play of $\strat e$ from $(w,e)$, for some $e\in \Gamma$.
        \item Every RHS trace along $B_{\strat e,\strat a}$, restricted to principal formulas, forms the right components of a play of $\strat a$ from $(w,f)$, for some $f\in \Delta$.
    \end{itemize}
    Now, $B_{\strat e,\strat a}$ cannot have a progressing trace on the LHS as $\strat e$ is winning for $\Eloise$, nor on the RHS as $\strat a $ is winning for $\Abelard$.
    Thus $P$ is not progressing.
\end{proof}

\begin{remark}
    [The need for positionality]
     Note that, in the proof above, we really must exploit positionality of $\strat e $ and $\strat a$ during the construction of $B_{\strat e, \strat a}$. A formula occurrence may have multiple trace histories, as the graph of immediate ancestry is not necessarily a tree when cedents are sets.
     Thus the inductive construction of $\Gamma_i \seqar \Delta_i$, in particular at a $\lr +$ or $\rr \cap$ step, is only well-defined for $\strat e$ and $\strat a $ positional.
\end{remark}

\subsection{The proof search game and determinacy}
In order to prove completeness of $\CRLLLang$ for $\Lang$, we rely on further game determinacy principles to organise bottom-up proof search appropriately.

\begin{definition}
    [Proof search game]
    The \emph{proof search game} (for $\LRLLLanghat$) is a two-player game played between Prover $(\prover)$, whose positions are inference steps of $\LRLLLanghat$, and Denier $(\denier)$, whose positions are sequents of $\LRLLLanghat$.
A \defname{play} of the game starts from a particular sequent:
at each turn, $\prover$ chooses an inference step with the current sequent as conclusion, and $\denier$ chooses a premiss of that step; the process repeats from this sequent as long as possible.

An infinite play of the game is \defname{won} by $\prover$ (aka \defname{lost} by $\denier$) if the branch constructed has a progressing trace; otherwise it is won by $\denier$ (aka lost by $\prover$). In the case of deadlock, the player with no valid move loses.\footnote{Technically, no position is a deadlock for $\prover$, as $\exh$ can even be applied to the empty sequent.}
\end{definition}

Note that any $\LRLLLanghat$ preproof can have only finitely many distinct sequents. The only formulas that may occur in a $\LRLLLanghat$ preproof $P$ of a sequent $\vec e \seqar \vec f$ belong to either some $\fl {e_i}$ or $\fl {f_j}$, of which there are finitely many, cf.~\cref{prop:fl-props}.
As sequents are sets of formulae, there are thus only finitely many sequents occurring in $P$.
As a result the proof search game for $\LRLLLanghat$ is \emph{finite state}: it has only finitely many positions.
From here it is not hard to see that the set of progressing branches from $\vec e\seqar \vec f$ forms an $\omega$-regular language over the (finite) alphabet of possible sequents, similarly to the progress checking automaton $\mathbf B_P$ from \cref{proof-checking}, only now independent of the particular preproof $P$, depending only on the end-sequent $\vec e\seqar \vec f$.
Consequently we have:

\begin{proposition}
[B\"uchi-Landweber]
    The proof search game from any sequent is finite memory determined.
\end{proposition}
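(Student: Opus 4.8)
The plan is to invoke the Büchi--Landweber theorem \cite{BuchiLandweber} directly, once the proof search game has been recast as a genuine $\omega$-regular game over a finite arena. First I would make precise the finite arena: its positions are the (finitely many) sequents reachable from the starting sequent $\vec e \seqar \vec f$ together with the inference steps applicable to them. As observed in the paragraph preceding the statement, every formula appearing in any $\LRLLLanghat$ preproof of $\vec e \seqar \vec f$ lies in some $\fl{e_i}$ or $\fl{f_j}$, and by \cref{prop:fl-props}\ref{item:fl-finite} these are finite; since cedents are \emph{sets} of such formulas, only finitely many sequents can occur. Thus the bipartite game graph, with $\denier$-positions the sequents and $\prover$-positions the inference steps, is finite. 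I would note that deadlock positions exist only for $\denier$ (an inference step with no premiss, such as an initial sequent) and never for $\prover$ (who can always apply $\exh$), so finite plays are already decided and pose no difficulty.

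Next I would identify the winning condition as $\omega$-regular. The set of infinite plays won by $\prover$ is exactly the set of branches carrying a progressing trace, and the key point is that this is an $\omega$-regular property of the infinite sequence of sequents (equivalently, of arena positions). This is precisely the content already used in \cref{proof-checking}: one builds a nondeterministic B\"uchi automaton that guesses a progressing trace on the fly, tracking its critical LHS-$\mu$ or RHS-$\nu$ formula and verifying that no $\subform$-smaller formula is unfolded past some guessed point. The only adaptation is that this automaton now reads the position sequence of the fixed finite arena rather than the branches of one particular preproof $P$, so it depends only on the end-sequent. Hence the winning condition is $\omega$-regular, and can be presented as a parity (or Rabin) condition over a finite memory-augmented arena.

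Finally I would apply the B\"uchi--Landweber theorem: every $\omega$-regular game played on a finite arena is determined with finite-memory winning strategies for both players. Applying this to the arena and winning condition just described yields that the proof search game from $\vec e \seqar \vec f$ is finite-memory determined, which is the claim. I do not expect any genuine obstacle here: the statement is essentially a bookkeeping reduction, and the real work is simply assembling the finiteness of the arena (already noted) with the $\omega$-regularity of the progress condition (already established, in the guise of $\mathbf B_P$, in \cref{proof-checking}). The one point requiring slight care is the passage from ``B\"uchi automaton recognising progressing branches of a \emph{given} preproof'' to ``B\"uchi automaton recognising progressing plays of the \emph{game}'', i.e.\ making the automaton's input the position sequence of the shared finite arena rather than the edge-labels of a fixed proof tree; but this is routine, since the arena already subsumes all preproofs of the fixed end-sequent.
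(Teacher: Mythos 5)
Your proposal is correct and takes essentially the same route as the paper: establish finiteness of the game arena via the Fischer--Ladner closure (\cref{prop:fl-props}) and the fact that cedents are sets, observe that the progressing-branch condition is $\omega$-regular by adapting the proof-checking automaton $\mathbf B_P$ of \cref{proof-checking} to depend only on the end-sequent, and then invoke the B\"uchi--Landweber theorem for $\omega$-regular games on finite arenas. Your additional care about deadlocks and about the automaton reading arena positions (rather than branches of a fixed preproof) only makes explicit what the paper leaves implicit.
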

Here `finite memory', in particular, means that the strategy needs only store a bounded amount of information at any time (see, e.g., \cite[Section 4]{PerPin:inf-word-aut-book} for more on $\omega$-regular games).
It is not hard to see that any finite memory $\prover$ strategy is just a regular preproof (where the finite memory corresponds to multiple, yet finite, occurrences of the same sequent).
A similar analysis applies to $\denier$ strategies.\footnote{The fact that a player having a winning strategy means they have a \emph{regular} winning strategy can also be seen as a consequence of Rabin's basis theorem, as the set of strategies (for either player) forms an $\omega$-regular tree language.}
Moreover if the strategy is winning for $\prover$ then the corresponding preproof is progressing.
Thus we have:
\begin{corollary}
    \label{cor:reg-prf-or-denier-win-strat}
    Any sequent is either $\CRLLLang$-provable or has a regular $\denier$ winning strategy.
\end{corollary}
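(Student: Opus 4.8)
The plan is to read the statement off directly from the finite memory determinacy of the proof search game established in the preceding Proposition (B\"uchi--Landweber), together with the two observations made just before the statement: that finite memory $\prover$ strategies correspond to regular preproofs, with \emph{winning} ones corresponding to progressing regular preproofs (\ie $\CRLLLang$-proofs), and that finite memory $\denier$ strategies correspond to regular $\denier$ strategies. So fix a sequent $\Gamma \seqar \Delta$. Since the proof search game from $\Gamma \seqar \Delta$ is finite state, the B\"uchi--Landweber theorem gives that it is finite memory determined: exactly one of $\prover$ and $\denier$ has a winning strategy from $\Gamma\seqar\Delta$, and the winner has a \emph{finite memory} such strategy.

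Suppose first $\prover$ wins, and fix a finite memory winning strategy $\sigma$. I would unfold $\sigma$ into a preproof $P_\sigma$ by taking the tree of all plays consistent with $\sigma$: at each $\prover$-turn one applies the inference step dictated by $\sigma$, and at each $\denier$-turn one branches over all premisses of that step. Three things must be checked, all routine given the setup. First, $P_\sigma$ is a genuine $\LRLLLanghat$ preproof, since every internal node applies the inference step chosen by $\prover$, whose premisses are exactly that node's children. Second, $P_\sigma$ is regular: because $\sigma$ has finite memory and the game has only finitely many positions, the subtree of $P_\sigma$ rooted at a node is determined, up to isomorphism, by the pair consisting of the current sequent and the current memory state, of which there are finitely many; hence $P_\sigma$ has only finitely many distinct sub-preproofs. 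Third, $P_\sigma$ is progressing: its infinite branches are precisely the infinite plays consistent with $\sigma$, each of which, since $\sigma$ is winning for $\prover$, carries a progressing trace. Thus $P_\sigma$ is a $\CRLLLang$-proof of $\Gamma\seqar\Delta$, witnessing the first disjunct.

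Otherwise $\denier$ wins, and a finite memory winning $\denier$ strategy is, by the entirely symmetric analysis, a regular $\denier$ winning strategy from $\Gamma\seqar\Delta$, witnessing the second disjunct. This exhausts the two cases. The only genuine content here, and hence the main thing to get right, is the translation between finite memory strategies and the corresponding regular proof-theoretic objects: concretely, verifying that unfolding a winning $\prover$ strategy preserves local validity of inferences and faithfully transports the $\omega$-regular winning condition (existence of a progressing trace) from plays of the game to branches of the resulting preproof. But this is exactly the substance of the remarks preceding the statement, so no further argument is needed.
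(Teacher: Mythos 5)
Your proposal is correct and follows essentially the same route as the paper: finite-memory determinacy of the (finite state) proof search game via B\"uchi--Landweber, then the identification of finite-memory $\prover$ winning strategies with regular progressing preproofs (\ie $\CRLLLang$-proofs) and of finite-memory $\denier$ winning strategies with regular $\denier$ strategies. The only difference is that you spell out the unfolding of a finite-memory strategy into a regular preproof (via pairs of sequent and memory state) in more detail than the paper, which leaves this as an ``it is not hard to see'' remark.
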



\subsection{A proof search strategy}
Before giving our completeness argument, let us describe a basic validity-preserving proof search algorithm, which shall serve as a `canonical' $\prover$ strategy in the proof search game (at least from guarded sequents).

\begin{definition}
    [Proof search strategy]
    \label{def:prf-search-strat}
    The $\prover $ strategy $\strat P$ in the proof search game is defined as follows, bottom-up:
    \begin{itemize}
        \item $\strat P$ applies left and right logical rules maximally.\anupam{make this also `fair' and I think we need not assume guardedness. Also need a weakening lemma, e.g. for proof of $\nu X X \seqar \sum\limits_{a\in \Alphabet}a\top$}
        
        \item At a sequent $\Gamma, ae,bf\seqar \Delta$ with $a\neq b$, $\strat P$ weakens $\Gamma$ and $\Delta $ then applies $\dis$ to finish the proof.
        \item At a sequent $a\Gamma \seqar \{b\Delta_b\}_{a\in \Alphabet}$, with $\Gamma \neq \emptyset$, $\strat P$ weakens all $b\Delta_b$ for $b\neq a$ and applies $\kk a $ to give the premiss $\Gamma \seqar \Delta_a$.
        \item At a sequent $\seqar \{a\Delta_a\}_{a\in \Alphabet}$ $\strat P$ applies $\exh$.
    \end{itemize}
\end{definition}

Note that $\strat P$ is indeed a well defined total $\prover$ strategy: no logical rule applies just when every formula in the sequent has form $ae$.\anupam{is it important that the strategy preserves nonemptiness of LHS?}
Moreover we have:
\begin{proposition}
[Validity preservation]
    $\strat P$ is validity-preserving. I.e.\ any play of $\strat P$ from a $\Lang$-valid sequent visits only $\Lang$-valid sequents.
\end{proposition}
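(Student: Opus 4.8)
The plan is to reduce the statement to a local, one-step property: each rule application that $\strat P$ performs is \emph{validity-reflecting}, meaning that if its conclusion is $\Lang$-valid then every premiss is $\Lang$-valid. Granting this, since a play of $\strat P$ begins at a valid sequent and each sequent it visits is a premiss of the $\strat P$-step applied to the previous one (whatever Denier chooses), a straightforward induction along the play shows that every visited sequent is valid. Thus the whole argument comes down to checking validity-reflection for the steps prescribed in \cref{def:prf-search-strat}. Note this is a converse to local soundness, \cref{prop:local-soundness}: there we inferred validity of the conclusion from that of the premisses, whereas here we need the reverse implication.

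For the logical rules this is just invertibility, read off directly from the semantics of \cref{dfn:lang-semantics} and the lattice equations \eqref{eq:dist-lattice}. The one-premiss rules ($\lr\top,\lr\cap,\lr\mu,\lr\nu$ and $\rr 0,\rr+,\rr\mu,\rr\nu$) replace the principal formula by auxiliaries of \emph{equal} semantics in context --- $\lang\top=\Alphabet^\omega$, $\lang{e\cap f}=\lang e\cap\lang f$, $\lang{\sigma Xe(X)}=\lang{e(\sigma Xe(X))}$, etc.\ --- so validity is literally preserved in both directions. For the two branching rules $\lr+$ and $\rr\cap$, the auxiliaries have semantics lying below the principal one on the relevant side: $\lang e,\lang f\subseteq\lang{e+f}$ on the left and $\lang{e\cap f}\subseteq\lang e,\lang f$ on the right, whence validity of the conclusion entails that of each premiss by monotonicity of $\subseteq$. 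The axioms $\lr 0$ and $\rr\top$ have no premisses and conclude only (vacuously) valid sequents, so there is nothing to check.

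The delicate cases are the $\Alphabet$-steps, precisely because $\strat P$ interleaves them with \emph{weakening}, which is \emph{not} validity-reflecting in general (dropping a LHS formula enlarges the meet; dropping a RHS formula shrinks the join). The point is that $\strat P$ only ever weakens inside the constrained contexts where the free-factorisation principles \eqref{eq:letter-lbdd-lattice-homo} and \eqref{eq:letters-freely-cogenerate-structure} make the discarded material provably irrelevant. At a $\dis$ step from $\Gamma,ae,bf\seqar\Delta$ with $a\neq b$, the equation $ae\cap bf=0$ forces the meet of the LHS to be $\emptyset$, so every intermediate weakening sequent $\Gamma',ae,bf\seqar\Delta'$ (with $\Gamma'\subseteq\Gamma$, $\Delta'\subseteq\Delta$) is valid outright, and the terminal $ae,bf\seqar$ closes by $\dis$. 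At a $\kk a$ step from $a\Gamma\seqar\{b\Delta_b\}_{b\in\Alphabet}$ with $\Gamma\neq\emptyset$, the meet $\bigcap_{e\in\Gamma}\lang{ae}$ consists only of words beginning with $a$, so by $ae\cap bf=0$ it is disjoint from every $\lang{be}$ with $b\neq a$; thus discarding the $b\Delta_b$ for $b\neq a$ preserves validity, and the residual step $a\Gamma\seqar a\Delta_a$ reflects to $\Gamma\seqar\Delta_a$ because prepending $a$ is an injection commuting with the (nonempty) meet and join --- here the side condition $\Gamma\neq\emptyset$ is essential, since $a\top\neq\top$. The $\exh$ step from $\seqar\{a\Delta_a\}_{a\in\Alphabet}$ uses $\top=\sum_a a\top$: validity $\Alphabet^\omega\subseteq\bigcup_a a\bigl(\bigcup\Delta_a\bigr)$ restricts, letter by letter and again by disjointness, to $a\Alphabet^\omega\subseteq a\bigl(\bigcup\Delta_a\bigr)$, whence $\Alphabet^\omega\subseteq\bigcup\Delta_a$ by injectivity, i.e.\ each premiss $\seqar\Delta_a$ is valid.

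The main obstacle is exactly this entanglement of weakening with the $\Alphabet$-rules: weakening breaks validity-reflection in isolation, so the crux is to observe that $\strat P$ confines weakening to the three scripted $\Alphabet$-rule patterns, in each of which the disjointness and partition laws \eqref{eq:letters-freely-cogenerate-structure} (together with the homomorphism laws \eqref{eq:letter-lbdd-lattice-homo} and the side condition $\Gamma\neq\emptyset$ on $\kk a$) certify that the discarded formulas cannot affect the inclusion. Everything else is routine invertibility.
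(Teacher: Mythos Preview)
Your proposal is correct and follows essentially the same approach as the paper's proof sketch: invertibility of the logical rules, $\dis$ and $\exh$, together with the observation that the weakenings preceding $\kk a$ (and $\dis$) are harmless because the ranges of distinct letters are disjoint in $\Lang$, cf.~\eqref{eq:letters-freely-cogenerate-structure}. Your write-up simply unpacks each case in more detail than the paper's terse sketch.
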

\begin{proof}
[Proof sketch]
By inspection the logical rules, $\dis$ and $\exh$ are all invertible, i.e.\ they preserve validity bottom-up. 
For the third case, weakening before applying the $\kk a $ rule is justified by the fact that the ranges of $a\cdot$ and $b\cdot$ are disjoint in $\Lang$ when $a\neq b$, cf.~\eqref{eq:letters-freely-cogenerate-structure}.
\end{proof}

Note also that, when playing from sequents containing only guarded formulas, no play of $\strat P$ eventually only applies logical rules: the $\Alphabet$ rules must be applied infinitely often.\todo{could insert a proof sketch here}
This turns out to be critical for the countermodel construction below: for unguarded sequents some extra `loop checking' required to define an appropriate $\prover $ strategy.

\subsection{Completeness}

By \cref{cor:reg-prf-or-denier-win-strat} above
it suffices, for completeness, to expand each $\denier$ winning strategy from a sequent into a countermodel:

\begin{lemma}
[Countermodel from $\denier$ winning strategy]
\label{lem:countermodel-from-D-win-strat}
    If $\denier$ has a winning strategy from a sequent $\Gamma\seqar \Delta$, for $\Gamma,\Delta$ containing only guarded expressions, there is some $w \in \bigcap\limits_{e\in \Gamma} \lang e$ with $w\notin \bigcup\limits_{f\in \Delta} \lang f $.
\end{lemma}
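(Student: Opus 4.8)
The plan is to pit Denier's winning strategy against the canonical Prover strategy and read a countermodel off the resulting branch. Concretely, let $\strat d$ be a winning $\denier$ strategy from $\Gamma \seqar \Delta$, and play it against the proof search strategy $\strat P$ of \cref{def:prf-search-strat}. Since both players are fixed, this determines a unique infinite branch $B = (\Gamma_i \seqar \Delta_i)_{i<\omega}$, and since $\strat d$ is winning, $B$ has no progressing trace. As $\strat d$ wins, $B$ is infinite and reaches no initial sequent; in particular no $\lr 0$, $\rr\top$ or $\dis$ step occurs along it. Because $\Gamma,\Delta$ are guarded, $\strat P$ applies an $\Alphabet$-rule ($\kk a$ or $\exh$) infinitely often along $B$ (cf.\ the remark after validity preservation). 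Reading off the letter consumed at each such step — the common leading letter of the nonempty LHS at a $\kk a$ step, or $\denier$'s choice at an $\exh$ step — then yields a well-defined word $w \in \Alphabet^\omega$.

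It remains to show $w \in \lang e$ for each $e \in \Gamma$ and $w \notin \lang f$ for each $f \in \Delta$; both follow from the Evaluation Lemma (\cref{lem:eval}) once we exhibit suitable evaluation game strategies read off from $B$. For $e \in \Gamma$ I would build an $\Eloise$ strategy from $(w,e)$ tracking the LHS-ancestry of $e$ along $B$: at a position $(v, g_0 + g_1)$ it plays the disjunct $\strat d$ retained at the corresponding $\lr +$ step; at $(v, g_0 \cap g_1)$ both conjuncts persist on the LHS of $B$ (via $\lr \cap$), so every $\Abelard$ choice is answered; and letter moves agree with $w$, since at a $\kk a$ step all LHS formulas share the forced leading letter. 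Each maximal play of this strategy is the principal-formula restriction of an LHS trace of $B$, so a finite play can only end at $\top$ (an $\Abelard$ deadlock, hence an $\Eloise$ win, as $0$ and letter-mismatches never occur on the LHS of $B$), while an infinite play is non-progressing, so by guardedness (\cref{obs:guarded-preproofs-no-stable-traces}) its critical formula is a $\nu$-formula, which is exactly $\Eloise$'s winning condition. Hence $\Eloise$ wins from $(w,e)$ and $w \in \lang e$.

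The argument for $\Delta$ is entirely dual: for $f \in \Delta$ I would build an $\Abelard$ strategy from $(w,f)$ tracking the RHS-ancestry of $f$, playing $\strat d$'s retained conjunct at each $\rr \cap$ step and answering both disjuncts that persist at each $\rr +$ step. An RHS formula whose leading letter differs from the next letter of $w$ is precisely one weakened by $\strat P$ before a $\kk a$ step, and corresponds to an $\Eloise$ deadlock $(av, bg)$ with $a \neq b$, which $\Abelard$ wins; a finite play reaching $0$ is likewise an $\Abelard$ win, and $\top$ never occurs on the RHS of $B$. An infinite play is a non-progressing RHS trace, hence by guardedness has a $\mu$ critical formula, so $\Eloise$ loses. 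Thus $\Eloise$ has no winning strategy from $(w,f)$, and by the Evaluation Lemma $w \notin \lang f$, completing the construction of the countermodel.

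The main obstacle is the precise bookkeeping of the correspondence between traces along $B$ and plays of the evaluation games: one must verify that the (stuttering) matching of $B$'s logical and $\Alphabet$ steps against evaluation moves is faithful on both sides of the sequent, that $\strat d$'s local premiss choices genuinely supply the required $\Eloise$/$\Abelard$ moves, and — crucially — that the global winning conditions line up, i.e.\ that a non-progressing LHS (resp.\ RHS) trace induces a $\nu$-critical (resp.\ $\mu$-critical) evaluation play. This last point is where guardedness is essential, as it rules out eventually-stable traces (\cref{obs:guarded-preproofs-no-stable-traces}) and guarantees that $w$ is genuinely infinite.
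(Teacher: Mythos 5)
Your proposal is correct and follows essentially the same route as the paper's own proof: play the canonical prover strategy $\strat P$ against the winning $\denier$ strategy, read the countermodel word off the $\Alphabet$ steps of the resulting infinite branch, convert the LHS (resp.\ RHS) ancestry of the branch into an $\Eloise$ (resp.\ $\Abelard$) strategy in the evaluation game, and use guardedness plus non-progression to see that every LHS trace is a $\nu$-trace and every RHS trace a $\mu$-trace before concluding via \cref{lem:eval}. Your additional care about deadlocks ($\top$ on the left, $0$ and weakened letter-mismatched formulas on the right) matches the paper's treatment of totality of the induced strategies.
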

\begin{proof}
Let $\strat D$ be a $\denier$ winning strategy from $\Gamma \seqar \Delta$ and $\strat P$ the $\prover$ strategy from \cref{def:prf-search-strat}.
 Play $\strat P$ against $\strat D$ from $\Gamma \seqar \Delta$ to obtain a branch $B$, which must be infinite by assumption that $\strat D$ is $\denier$-winning and since $\strat P$ is total.
    By guardedness, the play infinitely often applies an $\Alphabet$ step.
    For $i<\omega$ write $\infrule_i$ for the $i$\textsuperscript{th} $\Alphabet$ step along $B$, bottom-up. 
    We define the infinite word $w_B \df (a_i)_{i<\omega}$ as follows:
     \begin{itemize}
        \item if $\infrule_i$ is $\kk a $ then $a_i \df a$;
        \item if $\infrule_i $ is $\exh$ and $B$ follows the $a$-premiss then $a_i \df a$.
        \item ($\infrule_i$ cannot be $\dis$ as $B$ is infinite).
    \end{itemize}

     We will show that $w_B\in \lang e$ for each $e \in \Gamma$ but $w_B \notin \lang f $ for each $f \in \Delta$.
    First let us note also that, by guardedness, $B$ has no ultimately stable traces, cf.~\cref{obs:guarded-preproofs-no-stable-traces}. 
    Thus, since $\strat D $ is $\denier$-winning: 
    \begin{enumerate}
        \item\label{item:lhs-trace-nu} each LHS trace along $B$ is a $\nu$-trace; and,
        \item\label{item:rhs-trace-mu} each RHS trace along $B$ is a $\mu$-trace.
    \end{enumerate}
Now we have:
    \begin{itemize}
        \item $w_B \in \lang e$ for each $e \in \Gamma$. 
        By inspection of the definitions of $\strat P$ and $w_B$, note that the LHSs of $B$ uniquely determine an $\Eloise$ strategy $\strat e$ in the Evaluation Game from $(w_B,e)$, by restricting the ancestry graph from $e$ to principal formulas and conclusions of $\Alphabet$ steps.
        It is important here that $B$ is free of $\lr \wk$ steps (as $\strat P$ never applies them) so that $\strat e $ is total.
        By \eqref{item:lhs-trace-nu}, $\strat e$ is winning for $\Eloise$, so we conclude by adequacy of the Evaluation Game, \cref{lem:eval}.
        \item $w_B \notin \lang f$ for each $f\in \Delta$. 
        The RHSs of $B$ uniquely determine an Abelard strategy $\strat a$ in the evaluation game from $(w_B,f)$, by restricting the ancestry graph from $f$ to principal occurrences and conclusions of $\Alphabet$ steps.
        Note here that, again for totality of $\strat a$, the only $\rr \wk$ steps are applied at a sequent $a\Gamma \seqar \{b\Delta_b\}_{b \in \Alphabet}$ on formulas $bf$ for $b\neq a$.
        By definition any play of $\strat a$ up to such a weakened formula reaches a position $(aw,bf)$ with $b\neq a$ and so immediately $\Abelard$ wins anyway.
        So $\strat a $ is totally defined and any infinite play of it must be winning for $\Abelard$ by \eqref{item:rhs-trace-mu}.
        Again we conclude by adequacy of the Evaluation Game, \cref{lem:eval}. \qedhere
    \end{itemize}
\end{proof}

\begin{remark}
    [`Finite' countermodels]
Note that the countermodel generated above is actually rational when $\strat D $ is finite memory, i.e.\ $w_B$ is an ultimately periodic word. 
We may thus construe it as a finite countermodel, by representing it as a graph.
This is unsurprising in light of known basis theorems in the theory of $\omega$-regularity.
\end{remark}

We can now finally conclude our proof of the adequacy of $\CRLLLang$ for $\Lang$:
\begin{proof}
[Proof of $\impliedby$ direction of \cref{thm:adequacy-crlll-lang}]
    By contraposition. Suppose $\CRLLLang \not \proves \Gamma\seqar \Delta$, in which case there is a $\denier$ winning strategy from $\Gamma\seqar \Delta$ in the proof search game, by \cref{cor:reg-prf-or-denier-win-strat}.
    Thus by \cref{lem:countermodel-from-D-win-strat}, we have that $\bigcap\limits_{e\in \Gamma}\lang e \not \subseteq \bigcup\limits_{f \in \Delta}\lang f$.
\end{proof}

\section{Conclusions and further remarks}
\label{sec:concs}
In this work we studied an algebraic notation for alternating parity automata (APAs), in the form of right-linear lattice (RLL) expressions.
We designed a cyclic proof system $\CRLLLang$ manipulating RLL expressions, reasoning about inclusions between the $\omega$-languages they denote (thus also emptiness, universality and equivalence). 
Along the way we developed several game theoretic techniques, inspired by the cyclic proof theory tradition, for organising metalogical reasoning, in particular for our main soundness and completeness result, \cref{thm:adequacy-crlll-lang}.

\medskip

As $\CRLLLang$ is an analytic system,\footnote{While $\CRLLLang$ does not have the subformula property per se, the finitude of Fischer-Ladner closures means that only finitely many formulas occur in a proof, even linearly many in the size of the end-sequent, cf.~\cref{prop:fl-props}.} it is amenable to proof search and implementation, potentially offering new algorithms for deciding APA inclusion (and emptiness, universality, equivalence). 
Deciding inclusion for $\omega$-regular languages is of fundamental interest in \emph{model checking} (see, e.g., \cite{VW94,GPVW96,GO01,Holzmann11}).
Typical approaches only handle B\"uchi automata natively, so at least one novelty offered by $\CRLLLang$ is a new way to deal with expressions notating APAs directly, rather than their encodings by B\"uchi automata.
A feature of proof search based algorithms in general is that they admit a notion of \emph{certificate}: if we find a proof $P$ of $e\seqar f$, we can easily convince a third party (`checker') that $\lang e \subseteq \lang f$ by communicating $P$ as evidence, rather than requiring them to reprove it outright.

RLL expressions do not include a native complement operation, unlike the linear-time $\mu$-calculus or monadic second-order logic. 
However the closure of $\omega$-regular languages under complement implies that we can \emph{define} complementary expressions $e^c$ for each expression $e$, s.t.\ $\CRLLLang$ proves both sequents $\seqar e,e^c$ and $e,e^c \seqar $.
In fact, the duality of our syntax means that $e^c$ is easy to define, by straightforward structural induction on $e$. 
In particular $(\mu X e)^c \df \nu X e^c$ (with $X^c \df X$), highlighting the duality of $\mu $ and $\nu$; proofs of the aforementioned sequents are guaranteed by completeness of $\CRLLLang$.

In parallel we have proposed an axiomatisation $\RLLLang$ for the RLL theory of $\omega$-regular languages, in particular proving its soundness and completeness wrt $\omega$-languages.
Here the derivation of complements plays a crucial role in the completeness argument, in particular as it implies that the initial term model of $\RLLLang$ forms a Boolean Algebra.

\bibliographystyle{alpha}
\bibliography{biblio}

\end{document}